\title{Cryptoeconomic Security for\\ Data Availability Committees}
  \author{Ertem Nusret Tas \and Dan Boneh}
  \institute{Stanford University}
  \author{}
  \institute{}
\begin{document}

\addtolength{\textfloatsep}{-0.2in} 
\addtolength{\floatsep}{-0.2in}     
\captionsetup[table]{font=small,skip=4pt}   
\captionsetup[figure]{font=small,skip=4pt}
\renewcommand\subsubsection{\@startsection{subsubsection}{3}{\z@}%
                           {-8\p@ \@plus -4\p@ \@minus -4\p@}
                           {-0.5em \@plus -0.22em \@minus -0.1em}%
                           {\normalfont\normalsize\bfseries\boldmath}}

\maketitle

\begin{abstract}
Layer 2 systems have received increasing attention due to their potential to scale the throughput of L1 blockchains. 
To avoid the cost of putting data on chain, these systems increasingly turn to off-chain data availability solutions such as data availability committees (DACs). 
However, placing trust on DACs conflicts with the goal of obtaining an L2 architecture whose security relies solely on the L1 chain.
To eliminate such trust assumptions, we propose a DAC protocol that provides financial incentives to deter the DAC nodes from adversarial behavior such as withholding data upon request.
We then analyze the interaction of rational DAC nodes and clients as a dynamic game, with a Byzantine adversary that can corrupt and bribe the participants. 
We also define a notion of optimality for the DAC protocols, inspired by fairness and economic feasibility.
Our main result shows that our protocol is optimal and guarantees security with the highest possible probability under reasonable assumptions on the adversary.
\end{abstract}

\section{Introduction}
\label{sec:introduction}
Layer~2 systems~\cite{survey-1,survey-2,rollup} are an important approach to scaling
the throughput of Layer~1 blockchains such as Ethereum.
One of the key challenges in securing an L2 system 
is {\em data availability}: 
how to ensure that the state of the L2 system 
is always available and can be
reconstructed when needed?
This data is needed to safely restart the L2 system after a failure,
and for basic operations such as deposits and withdrawals.
The data availability problem comes up in other contexts as well,
such as in decentralized storage systems~\cite{arweave,filecoin,permacoin}.

\noindent
There are three general approaches to data availability in L2 systems:
\begin{itemize}[topsep=1ex,itemsep=1ex]
\item {\em On-chain data:} 
Rollup systems~\cite{rollup} store all transaction data on a Layer~1 \emph{parent} chain, such as Ethereum.  
These systems rely on the security of the L1 nodes to ensure that the data is always available. 

\item {\em Off-chain data stored by a Data Availability Committee (DAC):} 
Other systems such as StarkEx~\cite{starkex}, zkPorter~\cite{zkporter} and EigenLayr~\cite{EigenLayer} use a DAC
to store data off-chain across a number of trusted nodes~\cite{da-offchain}.
While the DAC provides a gas-efficient alternative to on-chain data, 
these systems rely on the correct operation of the DAC nodes to ensure that the data remains available. 

\item {\em Off-chain data with (repeated) Data Availability Sampling (Celestium~\cite{celestium}):}
An enhancement to DACs employs data availability sampling~\cite{albassam2018fraud,codedmerkletree,KZGdas,danksharding}
so that light clients, such as rollup users, can identify unavailable blocks created by the DAC without attempting to download the full block.
This approach is being used by modular blockchains such as Celestia~\cite{lazyledger} and Polygon Avail~\cite{avail} 
that specialize in preserving other chains' data.
However, DAS does not remove the trust assumption placed on the DAC nodes for data availability, since it requires DAC members to reply to DAS queries for data recovery.
DAS also cannot ensure that data remains permanently available~\cite{protodanksharding}.
\end{itemize}
Providing a standalone data availability service, such as Celestia and others, 
reflects a general trend towards modularity in the design of blockchains.

\medskip
In this paper, we focus on the security of Data Availability Committees (DAC), 
namely the last two bullets on the previous page. 
A DAC consists of multiple DAC members, 
which we call {\em nodes}, 
that store copies of the data that should be made available
(\eg, data sent by the rollup sequencer).
These nodes are expected to provide the data to querying clients 
in a timely manner.
Since malicious DAC members can withhold the data, 
DACs typically replicate the data on each DAC node for fault tolerance.
Thus, as long as one member is honest, rollup clients would receive the data upon request.
Although the storage requirement of the DAC scales linearly in the number of nodes due to replication, this redundancy can be reduced through the use of erasure codes and polynomial commitments.
For instance, the semi-AVID-PR scheme~\cite{validium} uses linear erasure-correcting codes
and homomorphic vector commitments to guarantee data availability as long as over $2/3$ of the nodes faithfully follow the protocol.

A major drawback of DACs is the need to trust the DAC members.
Consider a compromised DAC, where the adversary can prevent the reconstruction of the data, for example, by controlling more than $1/3$ of the DAC members.
Such a DAC can evolve the rollup state using unavailable transaction data, and withhold this data from the rollup clients.
This prevents clients from issuing transactions, and enables the adversary to steal client funds through ransom attacks~\cite{ransom}.
Thus, using a DAC hinders the goal of realizing a trust-minimized scaling architecture 
that relies solely on the security of the L1 chain for the safety and liveness of the rollup\footnote{Although current rollup systems typically rely on a single honest sequencer to evolve the rollup state, as long as the rollup data is available (\eg, on the L1 chain), any rollup full node can step up to fulfill the sequencer's role if it fails.}.
Liveness signifies that the clients can submit new transactions to the rollup system, and the system processes these transactions.

Data availability sampling (DAS) does not improve the liveness guarantees over the basic DAC architecture.
If the DAC is not compromised, then DAS helps rollup clients verify that the rollup data is available without downloading all the data from the DAC.
However, if the DAC is compromised, DAS provides no guarantees for data availability.
The compromised DAC can update the rollup state with unavailable transactions, and ignore all DAS queries from the clients.
Hence, DAS needs the trust assumption placed on the DAC members for liveness.

\paragraph{\bf Incentive-based data availability.}
One way to strengthen the security of a DAC is to rely on
financial incentives to deter the DAC members from adversarial behavior such as withholding data and lazy validation, where the DAC members \emph{pretend} as if the data was stored.
There are solutions such as Proofs of Custody~\cite{proof-of-custody} using financial disincentives (\eg, slashing) to encourage the lazy DAC members to store the entrusted data.
However, as withholding data is not a provable offense, 
it not clear how to enforce the slashing of the adversarial members' stake when they do indeed store the data, yet refuse to reveal it upon request (even if DAS is being used).
Moreover, any incentive-based data availability proposal 
must be analyzed in the face of rational DAC members who may respond to bribes, and Byzantine adversaries who may offer bribes.

Our main contribution is a DAC protocol that introduces a \emph{slashing} mechanism for malicious DAC nodes that withhold data.
The bulk of the paper is a technical analysis of the protocol, and
proves its security under certain assumptions on the adversary's power.
Moreover, we show that our protocol is optimal in a rigorous sense. 
We define the security model and the optimality notions in Sections~\ref{sec:model} and~\ref{sec:analysis}. 

We model the interactions of the DAC as a dynamic game involving multiple parties:
\begin{itemize}[itemsep=1ex]
\item 
DAC members, \ie, \defn{nodes}, are denoted by 
$\node_1, \ldots, \node_N$,
where $N$ is the number of nodes.  
These nodes store the data provided by an external entity.

\item 
A \defn{client} $\client$ sends a sequence of data queries to the $N$ nodes.
Every node can either respond to $\client$ with the requested data, or not respond. 
We assume the data held by the nodes is signed by the data provider, 
so that integrity of the response is easily verified.
If a response contains incorrect data, it is treated as a non-response.

\item 
A \defn{contract} running on the L1 chain is used to resolve disputes
and punish misbehaving DAC nodes. 
In particular, all $N$ nodes are staked, and the stake is held
in the contract.
If the nodes do not respond to $\client$ with the requested data, $\client$ can send its query to the contract.
In this case, the nodes are obliged to post their responses to the contract.
If a node provably fails to do so,
the contract can slash that node by confiscating part of its stake. 
Part of the slashed stake is given to the client as compensation and the rest is burned.
The size of the per-node stake and the behavior of the contract are the key
design decisions for a DAC protocol. 
\end{itemize}
Nodes and clients are rational agents that seek to maximize their utilities.
An adversary $\adv$ who fully controls $f$ corrupt nodes may
try to bribe the remaining $N - f$ nodes to cause a client
query to fail.  
This will make the requested data unrecoverable. 
Our goal is to design a DAC, so that under reasonable assumptions
on the size of $f$ and on the adversary's budget,
every query from the client will succeed with probability 
at least $1-\epsilon$, for some small $\epsilon$. 

Queries from the client model data requests needed for normal operations such as withdrawals.
For instance, in a rollup system, clients might have to prove their account balances with respect to the latest state root, 
and they do so by presenting a Merkle proof for their account. 
A non-responsive DAC storing the latest state can delay withdrawals by refusing to provide these Merkle proofs.
In this case, each client can post a query to the contract, and force the nodes to place the requested proof on the L1 chain.
Our model for the DAC system and the incentivize mechanism enforced by the contract has applications beyond data availability, and can be used to incentivize the honest participation of nodes in any committee outside the L1 chain that provides a service (\eg Decentralized Oracle Networks~\cite{chainlink-1,chainlink-2}).
We discuss use cases for our DAC system in Section~\ref{sec:model}. 

\paragraph{\bf The DAC protocol.} 
Suppose every query requires at least $k$ nodes out of $N$ to respond either directly to the client, or to the contract, for the client to obtain an answer to its query.
If no erasure coding is used and the data is replicated across all nodes, then $k=1$, otherwise $k$ could be bigger than $1$.

\smallskip\noindent
The protocol proceeds in four steps:
\begin{itemize}[itemsep=1ex,topsep=1ex]
\item {\em step 1:}  the client $\client$ sends its query to all DAC nodes over the network.
\item {\em step 2:}  if $k$ or more nodes respond, then the client obtains the requested data and the protocol terminates.
\item {\em step 3:}  if by a certain timeout the client does not receive $k$ responses, it posts its query to the contract on chain. 
For this purpose, the client has to send a base payment to the contract, 
which is needed to deter spamming clients.
We discuss the choice of client payment amount in Section~\ref{sec:discussion}.

\item {\em step 4:}  all $N$ nodes are then asked to post their responses to the query on chain. 
The protocol terminates once a certain timeout is reached.  
\end{itemize}
It remains to describe what the contract does once the timeout is reached in step~4.  
Every node that does not post its response to the contract by the timeout loses part or all of its stake.
The precise \emph{slashing function} is explained in Section~\ref{sec:contract}.
Moreover, if by the timeout in step~4 the client does not obtain an answer to its query through the responses,
the client is compensated by the contract using the funds obtained from the slashed nodes.

\ignore{
The slashing amount is determined by the following rule:
if $k$ or more nodes posted a response (so that the client obtains an answer to its query)
then the slashing amount is $p_w + \epsilon$ for some small $\epsilon$.   
If fewer than $k$ nodes posted a response (so that the query is not answered)
then the non-responsive node loses its entire stake $p_s$.  
Moreover, in this case the client is compensated by $p_{\mathrm{comp}}$ coins.
This compensation is paid by the contract who obtains the funds by slashing the non-responsive nodes.
The rest of the slashed amounts are burned.
}

The question is how to analyze the security and performance of a contract
in comparison to other contracts. 
In Section~\ref{sec:analysis} we present four desirable properties that a slashing function should satisfy.
Informally, these properties are:
\begin{itemize}
    \item \emph{Symmetry.} Motivated by fairness, the slashing function does not depend on the identities of the nodes, only on their actions.
    \item \emph{No Reward.} The slashing function does not pay out any rewards to the responsive nodes.
    This is motivated by economic feasibility as the contract should maintain a non-negative balance, and discourage the nodes from forcing an on-chain interaction for extra payoff rather than answering over the network. (No rewards rule does not rule out flat rewards by other means.)
    \item \emph{Security Under No Attack.} The slashing function ensures that the client promptly learns the correct response to its query, if the adversary does not offer any bribes.
    This captures a minimal notion of security.
    \item \emph{Minimal Punishment.}
    The slashing function keeps the slashed amounts of non-responsive nodes at a minimum when the client obtains an answer to its query.
    Thus, when most nodes are responsive, those that fail to respond due to benign failures, \eg, crash faults, are not heavily penalized.
\end{itemize}
We then define a notion of optimality for these functions:
\begin{definition}[Informal]
A slashing function is \defn{optimal} with respect to a set of slashing functions ${\cal F}$,
if 
the function satisfies the following two conditions:
(i) upon sending its query to the contract, the client obtains an answer with the maximum probability from among all the functions in ${\cal F}$ given the worst adversary, and
(ii) when the client obtains an answer, the function imposes the minimal punishment on non-responsive nodes from among the functions in ${\cal F}$. 
\end{definition}
In Section~\ref{sec:analysis}, we show that our slashing function is optimal for both \emph{risk-neutral} and \emph{risk-averse} nodes 
among the set of all functions that satisfy the four desirable properties described above.
We also analyze the security of a dynamic game among a rational client and the DAC nodes.
We identify the conditions under which the client obtains an answer to its query without calling the contract.
The analysis of Section~\ref{sec:analysis} is the most technical part of the paper, and is our core contribution.

\paragraph{\bf Evaluation.}

In Section~\ref{sec:evaluation}, we evaluate the 
real-world performance of our optimal contract.
To match the number of Ethereum validators and the minimum value that can be staked as an independent validator on Ethereum, we set the total number of DAC nodes to $N = 300,000$ and the amount staked per node to $32$ ETH.
Then, given risk-neutral nodes, the adversary has to offer a total bribe of $\approx 3.2 \cdot 10^3$ ETH ($\approx 3.9$ million USD\footnote{Ethereum to USD conversion rate, $1 \text{ ETH } \approx 1231.0 \text{ USD}$, is the average Ethereum price on July 15, 2022~\cite{conversion-rate}.}) to the nodes, to reduce the security probability per query by a tiny amount, 
namely to reduce the probability that a client learns the answer to its query from $100\%$ to $99.9\%$.
To prevent clients from learning the answers over repeated queries, the adversary has to spend at least $3.9$ million USD {\em for each query}. 
As our contract is optimal, no other contract can force the adversary to pay a higher bribe for the same security probability.
The minimum bribe needed by the adversary to reduce the security probability increases as $N$ or the collateral grows, or as the nodes become more risk-averse.

\section{Model}
\label{sec:model}
\paragraph{Notation.}
We denote the security parameter by $\lambda$.
We say that an event happens with negligible probability, if its probability, as a function of $\lambda$, is $o(1/\lambda^{d})$ for all $d>0$. 
We say that an event happens with overwhelming probability if it happens except with probability negligible in $\lambda$.
If an event happens with probability $q+\mathrm{negl}(\lambda)$ or $q-\mathrm{negl}(\lambda)$, where $q$ is a non-negligible constant, for simplicity, we say that the event happens with probability $q$.
We assume that except with probability negligible in $\lambda$, the contract implements the specified slashing function correctly, the underlying cryptographic primitives are secure, and messages can be posted to the contract within bounded time.
We use the shorthand $[N]$ to denote the set $\{1,2,\ldots,N\}$.

\paragraph{Environment and the Adversary.}
Time is slotted, and the clocks of the client and nodes are synchronized\footnote{Bounded clock offsets can be captured by the network delay.}.
Messages, \eg, queries and replies, can only be sent at the beginning of a slot, and are delivered to the recipient by the end of the same slot by the environment $\mathcal{Z}$.

Adversary $\mathcal{A}$ is a probabilistic polynomial time algorithm.
Before the execution starts, $\adv$ corrupts $f$ nodes, which are subsequently called adversarial.
These nodes can deviate from the protocol arbitrarily (Byzantine faults) under $\mathcal{A}$'s control, which has access to their internal states.
The remaining $N-f$ nodes and the client are utility maximizing agents and can choose any action that gives them a higher utility.
In the subsequent analysis, we will assume that $\node_i$, $i = N-f+1,\ldots, N$ represent the adversarial nodes, and $f \leq N-k$.
Otherwise, it is impossible to guarantee the recovery of the answer to a query as the adversarial nodes can withhold their responses from the client and the contract.

Before the protocol execution starts, the adversary can also offer \emph{bribes} to the \emph{remaining} nodes and the client subject to constraints.
It has a supply of $p_0$ coins, which can be distribute to any subset of the nodes as additional payoff if the nodes adopt an adversarial action during the game.
Similarly, the adversary can give up to $p_1$ coins to the client if it adopts an adversarial action.
Such an adversary is called a $(p_0,p_1)$-adversary.
When the bribe offered to the client is irrelevant, we use the notation $p_0$-adversary.
($p_0$ and $p_1$ are adversary's resources that are \emph{beyond} the $f$ nodes corrupted by the adversary.)
Upon hearing an offer, each participant can independently choose to accept or reject the bribe depending on the expected utility.
Once a participant accepts the bribe, the adversary can monitor through the environment and the contract if the specified action was taken.
Although the action and the exchange of the bribe might not happen atomically, the adversary and nodes can ensure that no party deviates from its promise through a trusted third party, or repeated games 
(\cf Section~\ref{sec:analysis-repeated-games}).

\paragraph{Actions, Payoffs, and the Game.}
We next describe the dynamic game played by the client and the DAC nodes.
Before the game starts, the client $\client$ and the nodes are input a single query by the environment $\mathcal{Z}$.
Given a query, each node $\node_i$ can instantaneously generate a response $c_i$, called the \emph{clue}.
We assume that the correctness of these clues can be verified by the clients and the contract\footnote{For instance, correctness of the data shards in PoS Ethereum can be verified with respect to a KZG commitment on the blockchain~\cite{kate,protodanksharding,danksharding}.}.
The contract accepts a clue by a node if and only if it is the first correct response by the node to a query posted to the contract.
It records the time slots when each query or clue was received, in a contract state.
At the beginning of each slot, the participants learn about the state recorded at the end of the previous slot.

Let $p_s$ be the amount staked by a node to function as a DAC member.
It costs $p_c$ coins for the client to send a query to the contract, and $p_w$ coins for each node to prepare and post the corresponding clue to the contract. 
It is free to send a clue to the client over the network.
These parameters are summarized in Table~\ref{tbl:params}.
We assume that each node starts the game with a baseline payoff of $C=p_s+p_w$, as it has $p_s$ coins staked in the contract, and is assumed to have enough funds to post clues to the contract during the game\footnote{For risk-neutral nodes, the baseline is normalized to be $0$.}.
The client starts the game with an initial payoff of $0$.

The actions available to the client $\client$ and a node $\node$ at any slot $t$ are denoted as follows:
\begin{itemize}
    \item $\reply$: $\node$ sends a correct clue to $\client$ over the network at slot $t$.
    \item $\query$: $\client$ sends a query to the contract for the first time at slot $t$.
    \item $\place$: Replying to a query, $\node$ sends a correct clue to the contract for the first time at slot $t$.
\end{itemize}
The notation $\neg (.)$ is used to denote the opposite of the specified action.
At any time slot, a node can take an action $(a,b)$, where $a \in \{\reply,\neg\reply\}$ and $b \in \{\place,\neg\place\}$.
Similarly, the client can take an action from $\{\query,\neg\query\}$.
Although the clients and nodes can exchange messages other than queries and clues, only the queries, clues or their absence can lead to a change in their payoffs.
Since the participants play a dynamic game, the actions chosen at later slots can depend on the actions observed at the earlier ones.

The game ends, and the payoffs are realized at the beginning of slot $\Tanswer$.
If $\client$ finds out the correct answer to its query through the clues, either posted to the contract or sent over the network, by slot $\Tanswer$, it receives a payoff of $p_f$ coins.
We set $\Tanswer=4$ though it can be any sufficiently large constant.
In our model, $\Tanswer$ should be at least $4$ to guarantee any meaningful security.
The payoffs of the participants depend on the bribes $p_0$ and $p_1$, the collateral $p_s$, the variables $p_f, p_c$, $p_w$ selected by $\mathcal{Z}$, and the contract's \emph{slashing function}.

Utility of a participant is given as a function $U(.)$ of the payoff obtained at the end of the game.
In the subsequent sections, we will first consider risk-neutral nodes with a linear utility function $U(x)=x$, where $x$ is the net payoff at the end.
We will then analyze risk-averse nodes with a strictly concave utility function of the form $U(x)=(x)^\nu$, where $\nu \in (0,1)$.
We do not consider risk-seeking nodes with strictly convex utility functions, \eg, $U(x)=(x)^\nu$, $\nu > 1$, as such a function violates the law of diminishing marginal utility for the payoffs.

We will later also consider a sub-game that focuses exclusively on the interaction between the nodes and the contract.
In the game, a query appears in the contract at some slot $t$, and the nodes choose to post clues or not at slot $t+1$, after which the payoffs are realized.
These payoffs depend on the bribe $p_0$, the collateral $p_s$, the cost $p_w$, and the slashing function.

\paragraph{Security.}
We say $\Tanswer$-security is satisfied if the client receives $k$ or more correct clues from the nodes either over the network or through the contract by the \emph{beginning} of slot $\Tanswer$ with overwhelming probability.

\paragraph{Application.}
The game above models the withdrawal of client funds from a blockchain or rollup.
Each client has an account, represented as a key-value pair, and the balances of these accounts constitute the blockchain state.
The hashes of the key-value pairs are organized in a vector commitment, \eg, a sparse Merkle tree, with a constant size commitment, called the state root.
The state data is preserved by the DAC nodes and state commitments are posted to the chain.

To prove its account balance, a client requests a witness from the nodes for the inclusion of its account within the latest state.
If it does not receive a witness over the network, the client can complain on a \emph{smart contract} by sending a query that contains the hash of the account's key-value pair.
If the hash is a hiding commitment, the client can also ensure that no observer learns its balance.
It can always prove its balance to a select third party by revealing the key-value pair at the pre-image of the hash, the latest state root on chain, and the witness.

Upon receiving a query, the contract expects a witness to be provided by the DAC nodes within a bounded time, \eg, the chain's confirmation latency.
Correctness of this witness can be verified by the contract and the client with respect to the state commitment on the chain.
If the query is for an account not included in the latest state, the nodes can convince the contract of this fact via a proof of non-inclusion.
If there are multiple queries, instead of sending the witness for each query, the nodes can compute a SNARK proof that verifies the inclusion of all the queried accounts within the state.
Clients can then verify the inclusion of the queried accounts by checking the proof with respect to the latest state root, and the hashes of the queried accounts.
Succinctness of the SNARK proof enables achieving bounded delay on the response time.

\section{The Optimal Contract}
\label{sec:contract}
\vspace{-0.5cm}
\begin{table}
\begin{center}
\begin{tabular}{||c | c||} 
 \hline
 Parameter & Explanation \\ [0.5ex] 
 \hhline{||=|=||}
 $N$ & Number of nodes \\ 
 \hline
 $p_0$ & Total payoff the adversary can offer to the nodes \\ 
 \hline
 $p_1$ & Total payoff the adversary can offer to the client \\
 \hline
 $p_{\mathrm{comp}}$ & Compensation for the client if reconstruction fails \\
 \hline
 $p_f$ & Client's payoff from a valid reply within $4$ slots \\
 \hline
 $p_c$ & Cost of sending a query to the contract \\
 \hline
 $p_w$ & Cost of constructing and sending a clue to the contract \\
 \hline
 $p_s$ & Collateral per node\\
 \hline
\end{tabular}
\end{center}
\caption{Parameters in our model}
\vspace{-0.5cm}
\label{tbl:params}
\vspace{-0.2cm}
\end{table}

A contract can reward or punish the nodes depending on whether it received clues from the nodes for a query within a timeout period.
We normalize this timeout to be a single slot for all contracts\footnote{In a network with temporary partitions, the timeout can be increased to guarantee the timely inclusion of the messages sent to the contract.}.
Let $x_i=1$ if the node $\node_i$ sends a valid clue at slot $t+1$ in response to a query posted at some slot $t$, and $x_i=0$ otherwise.
We characterize a contract by a slashing function $f$ that maps actions $\mathbf{x} = (x_1,\ldots,x_N) \in \{0,1\}^N$ to payoffs $(f_1(\mathbf{x}), \ldots, f_N(\mathbf{x})) \in \mathbb{R}^N$ for the nodes, and the payoff $f_{\client}(\mathbf{x}) \in \mathbb{R}$ for the client.
Since the contract cannot punish the nodes more than the staked collateral, $f_i(\mathbf{x}) \geq -p_s$ for every action $\mathbf{x} \in \{0,1\}^N$.
We will hereafter use slashing function and the contract interchangeably.

The proposed optimal contract and the associated slashing function is parameterized by a small number $\epsilon>0$:
\begin{align*}
f_i(\mathbf{x}) = & \begin{cases}
    0  & \text{ if } x_i = 1 \\
    -p_s  & \text{ if } \sum_{j=1}^N x_j < k \text{ and } x_i = 0 \\[1mm]
    -p_w-\epsilon  & \text{ if } \sum_{j=1}^N x_j \geq k \text{ and } x_i = 0
\end{cases}
\end{align*}
\begin{align*}
f_{\client}(\mathbf{x}) = & \begin{cases}
    0  & \text{ if } \sum_{j=1}^N x_j \geq k \\[1mm]
    p_{\mathrm{comp}}  & \text{ if } \sum_{j=1}^N x_j < k
\end{cases}
\end{align*}
Here, $p_{\mathrm{comp}}<p_s,p_f$, and $p_{\mathrm{comp}}>p_c$ to ensure that the client's net payoff stays above zero if it does not receive sufficiently many clues through the contract.

The contract burns, \ie, slashes the collateral $p_s$ put up by each node that has not sent a valid clue by the end of slot $t+1$, if there are less than $k$ clues.
In this case, the contract also awards $p_\mathrm{comp}$ of the slashed coins to $\client$.
Otherwise, if there are $k$ or more clues in the contract by slot $t+1$, it punishes the non-responsive nodes by a modest amount, namely $p_w+\epsilon$.

\section{Analysis}
\label{sec:analysis}

In Section~\ref{sec:analysis-contract-properties}, we formalize the desirable properties and notions of optimality for slashing functions.
In Section~\ref{sec:analysis-risk-neutral}, 
we show that the slashing function of Section~\ref{sec:contract} is optimal for risk-neutral and risk-averse nodes.
In Section~\ref{sec:analysis-game}, we generalize the analysis to a dynamic game with a rational client.
In Section~\ref{sec:analysis-repeated-games}, we analyze a repeated game played between the nodes and the adversary.

\subsection{Contract Properties}
\label{sec:analysis-contract-properties}

The desirable properties for a slashing function $f$, introduced in Section~\ref{sec:introduction}, are formalized below:
\begin{itemize}[itemsep=1ex]
    \item \emph{A1: Symmetry.} A slashing function $f$ is symmetric if $f(\pi(\mathbf{x})) = \pi(f(\mathbf{x}))$ for every action $\mathbf{x} \in \{0,1\}^N$ and permutation $\pi$.
    
    \item \emph{A2: No Reward.} A slashing function $f$ offers no rewards if for every action $\mathbf{x} \in \{0,1\}^N$, $f_i(\mathbf{x}) \leq 0$, $\forall i \in [N]$, and $f_{\client}(\mathbf{x}) + \sum_{i \in [N]} f_i(\mathbf{x}) \leq 0$.
    
    \item \emph{A3: Security Under No Attack.} A slashing function $f$ guarantees security under no attack if for all $(0,0)$-adversaries, it achieves $\Tanswer$-security with overwhelming probability in all Nash equilibria of the game.
    
    \item \emph{A4: $B$-Minimal punishment.}
    A slashing function $f$ offers $B$ minimal punishment if for every action $\mathbf{x} \in \{0,1\}^N$ such that $\sum_{i=1}^N x_i \geq k$, 
    we have that $f_i(\mathbf{x}) \geq -B$ for all $i \in [N]$.
\end{itemize}

\begin{definition}
\label{def:compliant}
A slashing function $f$ is said to be \defn{compliant} if it satisfies the axioms A1--A3, and the axiom A4 for some constant $B \in \mathbb{R}^+$.
\end{definition}

\begin{definition}
\label{def:tolerant}
A compliant slashing function $f$ is said to be \defn{$(p_0,q)$-tolerant} if for all $p_0$-adversaries, when a query is received by the contract at some slot $t$, there are $k$ or more correct clues in the contract at slot $t+1$, with probability at least $q$, in all Nash equilibria.
\end{definition}
The value $q$ of a $(p_0,q)$-tolerant contract can be interpreted as the minimum probability for security given that the client received no responses over the network and sent its query to the contract.

We next introduce two notions of optimality for the contract.
A security-optimal function ensures that for any $p_0$, security is violated with the minimum possible probability in the equilibrium with the largest failure probability.

\begin{definition}
\label{def:punishment-optimal}
A compliant slashing function $f$ is said to be \defn{security-optimal} if for all $p_0 \geq 0$, there exists a $q_0 \in [0,1]$ such that $f$ is $(p_0,q_0)$-tolerant, and there does not exist any compliant, $(p_0,q)$-tolerant function $f'$, where $q > q_0$.
\end{definition}

\noindent
A punishment-optimal contract imposes the minimum punishment on the unresponsive nodes (\eg, due to benign errors) if security was not compromised.
\begin{definition}
A compliant slashing function $f$ is said to be \defn{$\epsilon$-punishment-optimal} if it satisfies $B$-minimal punishment, and 
no compliant slashing function $f'$ can satisfy $B'$-minimal punishment for some $B'<B-\epsilon$.
\end{definition}

Finally, we combine the two notions of optimality in a single definition:
\begin{definition}
\label{def:optimal}
A family of slashing functions $f_\epsilon$, parameterized by $\epsilon$, is said to be \defn{optimal} if each member $f_\epsilon$ of the family is compliant, security-optimal and \emph{$\epsilon$-punishment-optimal}.
\end{definition}
\subsection{Analysis of The Optimal Contract}
\label{sec:analysis-risk-neutral}

We prove the following theorem for risk-neutral and risk-averse nodes.

\begin{theorem}
\label{thm:risk-neutral-optimality}
The family of slashing functions described in Section~\ref{sec:contract} is optimal.
\end{theorem}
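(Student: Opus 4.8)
The plan is to establish the three conditions of Definition~\ref{def:optimal}---compliance, $\epsilon$-punishment-optimality, and security-optimality---in turn, working first with risk-neutral nodes. I would reduce every claim to the contract subgame in which a query has appeared and each rational node chooses $\place$ or $\neg\place$, since tolerance and both optimality notions are phrased in terms of this subgame. Writing a node's normalized payoff as $-p_w$ when it posts, as $-p_w-\epsilon$ when it abstains while the outcome still has $\ge k$ clues, and as $-p_s$ when it abstains and the outcome has $<k$ clues, all strategic comparisons reduce to elementary inequalities among $p_w$, $\epsilon$, and $p_s$. Compliance then follows quickly: A1 holds because $f_i(\mathbf{x})$ depends only on $x_i$ and $\sum_j x_j$; A2 holds because every node payoff is nonpositive and, on the $\sum_j x_j<k$ branch, at least $N-k+1$ nodes pay $-p_s$ while the client receives $p_{\mathrm{comp}}<p_s$; A4 holds with $B=p_w+\epsilon$, the only negative value on the $\sum_j x_j\ge k$ branch; and A3 holds because, absent bribes, $\place$ yields $-p_w$ and strictly dominates $\neg\place$, so all $N-f\ge k$ rational nodes post in every Nash equilibrium, yielding $\Tanswer$-security.

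For $\epsilon$-punishment-optimality our function attains $B=p_w+\epsilon$, so it remains to show that no compliant $f'$ attains $B'<p_w$. I would prove that $B'<p_w$ forces a Nash equilibrium that fails security, contradicting A3. If the abstention penalty in the $\ge k$ regime is below $p_w$, then a non-pivotal node strictly prefers to free-ride, so the all-post profile is not an equilibrium; examining the symmetric profile in which each node posts with probability $\rho$, the indifference condition equates $p_w$ to a convex combination of the free-ride penalty (which is $<p_w$) and the $<k$ penalty (at most $p_s$). A continuity argument then yields an interior $\rho^\ast\in(0,1)$ whenever the $<k$ penalty exceeds $p_w$, producing a mixed equilibrium with strictly positive failure probability; and if the $<k$ penalty is itself at most $p_w$, abstaining dominates outright and every node abstains, failing with probability one. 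Either way A3 is violated, so every compliant function satisfies $B'\ge p_w=B-\epsilon$, and the strict slack $\epsilon$ is precisely what lets our function rule out these mixed equilibria by making $\place$ strictly dominant.

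Security-optimality is the technical core and the step I expect to be the main obstacle. For a fixed $p_0$-adversary I would first characterize the worst Nash equilibrium under $f_\epsilon$: a node abstains in a failure outcome only if its bribe covers the gap $p_s-p_w$ between posting and losing its full stake, and forcing the clue count below $k$ requires keeping at least $N-f-k+1$ rational nodes unresponsive, so certain failure is sustainable exactly when $p_0\ge(N-f-k+1)(p_s-p_w)$, while smaller budgets admit only a bounded failure probability that pins down the tolerance $q_0(p_0)$. I would then show no compliant $f'$ beats this: by No Reward the posting payoff is at most $-p_w$ and by the stake bound the $<k$ abstention payoff is at least $-p_s$, so the per-node incentive gap a bribe must overcome is at most $p_s-p_w$ for \emph{every} compliant function; hence the adversary never spends more against $f'$ than against $f_\epsilon$ and cannot drive any compliant function below $q_0(p_0)$. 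The genuinely hard part is the quantitative mixed-equilibrium analysis that converts a sub-threshold budget into the exact worst-case failure probability, in particular handling the fact that a bribe is paid only on realizations in which the promised action is actually taken, so that the adversary can spread a fixed budget across a randomized equilibrium.

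For risk-averse nodes with $U(x)=x^\nu$, $\nu\in(0,1)$, the same three-part structure applies with the linear payoffs replaced by $U$ evaluated at the shifted terminal payoffs $p_s$, $p_s-\epsilon$, and $p_w$. Monotonicity of $U$ preserves the strict dominance of $\place$ required for compliance, and the deterministic bribe threshold $b\ge p_s-p_w$ is unchanged since $U$ is increasing; concavity only enlarges the certainty-equivalent cost of the risky $<k$ outcome, so it can only raise the bribe the adversary must pay and never lowers $q_0(p_0)$. Because the extremal constraints---No Reward and the stake bound---are properties of the payoffs rather than of $U$, the security-optimality and $\epsilon$-punishment-optimality conclusions carry over, establishing optimality in both regimes.
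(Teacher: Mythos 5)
Your three-part decomposition into compliance, $\epsilon$-punishment-optimality, and security-optimality is exactly the paper's (Theorems~\ref{thm:axioms}, \ref{thm:axioms-converse} and~\ref{thm:security-two-risk-averse}), and your free-rider argument for punishment-optimality is essentially the paper's proof of Theorem~\ref{thm:axioms-converse}. The genuine gap is in security-optimality, which you yourself flag as ``the main obstacle'' and leave as a plan: the missing step is the quantitative content of the theorem. Concretely, two holes. First, the upper bound: you must show that against \emph{every} $p_0$-adversary and in \emph{every} Nash equilibrium of the contract subgame, the failure probability under $f_\epsilon$ is at most $q^* = p_0/((N-f-k+1)(p_s-p_w))$. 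Your per-node threshold claim (``a node abstains in a failure outcome only if its bribe covers the gap $p_s-p_w$'') is false for randomized strategies --- a node that withholds only with probability $q$ can be bought for roughly $q(p_s-p_w)$ --- which is precisely why the bound is linear in $p_0$ rather than a sharp threshold, and you concede you do not know how to handle this. The paper closes it with an aggregate counting argument you do not have: conditional on failure, at least $N-f-k+1$ non-adversarial nodes lose $p_s$, so if $\Pr[\text{failure}] > q^*$ the \emph{total} expected payoff of the non-adversarial nodes falls strictly below $-(N-f)p_w$, forcing some node's expected payoff below $-p_w$; but any node can guarantee $-p_w$ by posting, a contradiction. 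Second, the matching lower bound must be verified as a Nash equilibrium for an \emph{arbitrary} compliant $f'$: the paper uses symmetry to define the conditional payoffs $e_0, e_1$, invokes no-reward together with $B \geq p_w$ (imported from Theorem~\ref{thm:axioms-converse}) to get $e_1 \leq 0$ and $e_0 \leq -p_w$, and then checks case by case that accepting the bribe is undominated; your one-sentence ``the adversary never spends more against $f'$ than against $f_\epsilon$'' does not substitute for this verification.

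Two smaller gaps. (a) Your A3 argument covers only the subgame after a query reaches the contract, but A3 is a property of the full dynamic game: you must also show that a rational client sends its query when it receives fewer than $k$ clues over the network, which is where $p_{\mathrm{comp}} > p_c$ enters (the paper's remark after Lemma~\ref{lem:security-zero} shows that with $p_{\mathrm{comp}} \leq p_c$ there is an equilibrium in which nobody acts and security fails); similarly, lifting the free-rider subgame equilibrium to a subgame perfect equilibrium of the full game uses $k>1$, which you never invoke. (b) For risk-averse nodes the conclusions do not simply ``carry over'': the worst-case failure probability $q^*_{p_0,\nu}$ has no closed form for $\nu<1$, and the paper proves security-optimality by showing the value of a non-convex optimization problem bounding $f_\epsilon$'s failure probability is achievable against every compliant function, via the concavity inequality of Lemma~\ref{lem:sublemma-inequality-risk-averse}; monotonicity of $U$ plus a certainty-equivalence remark does not yield this.
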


Theorem~\ref{thm:risk-neutral-optimality} follows from Theorems~\ref{thm:axioms}, \ref{thm:axioms-converse}, and~\ref{thm:security-two-risk-averse}.
Their proofs for risk-neutral and risk-averse nodes are given in Appendices~\ref{sec:appendix-proofs-risk-neutral} and~\ref{sec:appendix-proofs-risk-averse} respectively.

We first showing that the slashing function is compliant:
\begin{theorem}
\label{thm:axioms}
Each slashing function from Section~\ref{sec:contract}, parameterized by $\epsilon>0$, 
satisfies symmetry (A1), no reward (A2), security under no attack (A3), and $(p_w+\epsilon)$-minimal punishment (A4).
\end{theorem}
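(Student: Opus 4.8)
The plan is to dispatch \emph{A1}, \emph{A2}, and \emph{A4} by direct inspection of the definitions in Section~\ref{sec:contract}, and to reserve the real work for \emph{A3}, the only property that invokes the game and its equilibria.

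For \emph{symmetry}, the key observation is that $f_i(\mathbf{x})$ depends on $\mathbf{x}$ only through the coordinate $x_i$ and the quantity $\sum_j x_j$, while $f_\client(\mathbf{x})$ depends only on $\sum_j x_j$. Since $\sum_j x_j$ is invariant under any permutation $\pi$ and $(\pi(\mathbf{x}))_i = x_{\pi^{-1}(i)}$, unwinding the definition gives $f_i(\pi(\mathbf{x})) = f_{\pi^{-1}(i)}(\mathbf{x}) = (\pi(f(\mathbf{x})))_i$. For \emph{no reward}, every branch of $f_i$ is one of $0$, $-p_s$, $-(p_w+\epsilon)$, hence $\leq 0$; for the aggregate inequality I split on $\sum_j x_j$. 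If $\sum_j x_j \geq k$ then $f_\client(\mathbf{x}) = 0$ and all node terms are nonpositive. If $\sum_j x_j < k$ then $f_\client(\mathbf{x}) = p_{\mathrm{comp}}$, but at least $N-k+1 \geq 1$ nodes are non-responsive and each is slashed $p_s$, so $f_\client(\mathbf{x}) + \sum_i f_i(\mathbf{x}) \leq p_{\mathrm{comp}} - p_s < 0$ using $p_{\mathrm{comp}} < p_s$. For \emph{$(p_w+\epsilon)$-minimal punishment}, when $\sum_j x_j \geq k$ each $f_i(\mathbf{x})$ is either $0$ or $-(p_w+\epsilon)$, so $f_i(\mathbf{x}) \geq -(p_w+\epsilon)$.

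The substance is \textbf{A3}. Here the $(0,0)$-adversary offers no bribes but still controls the $f \leq N-k$ Byzantine nodes, which I treat as worst-case withholders; the remaining $N-f \geq k$ rational nodes and the client maximize utility. The plan is to establish two \emph{unconditional} dominance facts. First, once a query has been posted to the contract, $\place$ is a strictly dominant action for each rational node: placing yields net payoff $p_s$ (it pays $p_w$ out of its baseline $C=p_s+p_w$ and is not slashed since $x_i=1$), whereas not placing yields $p_w$ when fewer than $k$ other nodes respond and $p_s-\epsilon$ otherwise; under $p_w+\epsilon < p_s$ (the modest punishment is smaller than a full slash) and $\epsilon>0$ both alternatives are strictly worse, and the comparison uses only monotonicity of $U$, so it covers risk-neutral and risk-averse nodes alike. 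Second, whenever the client holds fewer than $k$ clues from the network it strictly prefers to send $\query$: abstaining gives payoff $0$, while querying gives $p_f - p_c$ if it later obtains $k$ or more clues and $p_{\mathrm{comp}} - p_c$ otherwise, both positive by $p_f > p_{\mathrm{comp}} > p_c$.

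Combining these yields security in \emph{every} Nash equilibrium. If the client already has $k$ or more correct network clues, $\Tanswer$-security holds outright. Otherwise the second fact forces the client to query, and then the first fact forces all $N-f \geq k$ rational nodes to place correct clues, so the client obtains at least $k$ correct clues by slot $\Tanswer$. I expect the main obstacle to be precisely the quantifier ``in all Nash equilibria'' of an extensive-form game, where off-path node behavior is a priori unconstrained: a candidate equilibrium in which nodes plan to withhold at the contract could naively appear to deter the client from querying. The resolution is that both dominance facts hold pointwise against any fixed behavior of the other players, so the withholding threat is never a best response once the contract is reached, and, crucially, the client's incentive to query survives even under the belief that nodes will withhold, since querying still nets $p_{\mathrm{comp}} - p_c > 0$. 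A final routine step folds the negligible failure probabilities of the cryptographic primitives and of contract execution into the statement, upgrading the deterministic ``at least $k$ clues'' conclusion to the overwhelming probability required by $\Tanswer$-security.
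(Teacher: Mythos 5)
Your proposal is correct and takes essentially the same approach as the paper: A1, A2, and A4 by direct inspection, and A3 via the same two incentive facts the paper's Lemma~\ref{lem:security-zero} rests on --- the client strictly gains by querying the contract when it lacks $k$ network clues (since $p_{\mathrm{comp}}-p_c>0$ and $p_f-p_c>0$), and placing strictly dominates withholding once a query is on chain (since $-p_w$ exceeds both $-p_s$ and $-p_w-\epsilon$). The only difference is presentational: the paper runs the argument as a proof by contradiction over an explicit payoff table and duplicates it for risk-neutral and risk-averse utilities in two appendices, whereas you argue forward by dominance and cover both utility classes at once via monotonicity of $U$.
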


The axioms A1, A2 and A4 follow by inspection, whereas A3 is shown by Lemma~\ref{lem:security-zero}.
Proof of Lemma~\ref{lem:security-zero} is given in Appendices~\ref{sec:appendix-proofs-risk-neutral} and~\ref{sec:appendix-proofs-risk-averse} for risk-neutral and risk-averse nodes respectively.

\begin{lemma}
\label{lem:security-zero}
Given the slashing function of Section~\ref{sec:contract}, for any $(0,0)$-adversary $\mathcal{A}$, $4$-security is satisfied with overwhelming probability in all Nash equilibria. 
\end{lemma}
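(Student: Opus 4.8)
The plan is to analyze the game under a $(0,0)$-adversary and show that in every Nash equilibrium, the client learns the answer by the beginning of slot $\Tanswer=4$ with overwhelming probability. Since there are no bribes and at least $k$ of the $N-f$ rational nodes are needed (recall $f \le N-k$, so at least $k$ honest/rational nodes exist), the core of the argument is that posting a clue is the dominant action for every rational node whenever a query has reached the contract. First I would fix a query and trace the four-slot timeline: the client sends over the network (step 1), rational nodes may reply (step 2), the client may query the contract (step 3), and nodes decide whether to $\place$ a clue (step 4). The payoffs are realized at slot $\Tanswer$.

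The heart of the argument is a backward-induction on the final sub-game, i.e.\ the node--contract interaction once a query appears in the contract at slot $t$. Here I would compare, for a single rational node $\node_i$, the payoff from $\place$ versus $\neg\place$. Under the slashing function of Section~\ref{sec:contract}, playing $\place$ costs a node $p_w$ (its posting cost) and incurs slashing $f_i=0$, for a net cost of $p_w$. Playing $\neg\place$ incurs slashing $-p_w-\epsilon$ if at least $k$ clues arrive and $-p_s$ if fewer than $k$ clues arrive. Since $p_s > p_w$ and $p_w+\epsilon > p_w$, in either contingency the node strictly prefers $\place$ to $\neg\place$: posting costs $p_w$, while not posting costs strictly more than $p_w$ regardless of what the other nodes do. Thus $\place$ strictly dominates $\neg\place$ for each rational node in the sub-game, so in every Nash equilibrium all $N-f$ rational nodes post their clues. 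Because $N-f \ge k$, the contract receives at least $k$ correct clues at slot $t+1$, and the client learns the answer. I would note the $\epsilon > 0$ is exactly what makes $\neg\place$ strictly (not weakly) dominated in the $\ge k$ case, which is needed to pin down equilibrium behavior rather than merely permit it.

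It then remains to handle the earlier slots and confirm the client actually reaches the contract when necessary. If at least $k$ rational nodes $\reply$ over the network in step 2, the client already has its answer with payoff $p_f$ and the game terminates early; otherwise the client, being rational, will $\query$ the contract in step 3 provided its net payoff from doing so is positive. This is where the parameter constraints $p_{\mathrm{comp}} > p_c$ and $p_{\mathrm{comp}} < p_f$ enter: querying costs $p_c$ but guarantees either the answer (payoff $p_f$, since by the sub-game analysis at least $k$ clues will be posted) or, in the off-equilibrium event of insufficient clues, compensation $p_{\mathrm{comp}} > p_c$; either way the client's net payoff exceeds the no-query payoff, so $\query$ is the client's best response. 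Combining: whenever the client lacks $k$ clues after step 2 it queries the contract, and then the dominance argument forces $\ge k$ clues at slot $t+1$, delivering the answer by slot $4$.

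The main obstacle I anticipate is not the dominance computation itself but carefully ruling out spurious Nash equilibria in the full dynamic game. Because the analysis must hold in \emph{all} Nash equilibria, I must ensure that no equilibrium exists in which, say, the client declines to query (anticipating non-response) while the nodes decline to post (anticipating no query) in a mutually-reinforcing way. The strict dominance of $\place$ in the sub-game is what breaks such coordination failures: since posting is optimal \emph{independently} of the client's and other nodes' actions, the nodes' behavior is fixed by backward induction, and the client can then compute its best response against this fixed behavior. I would also need to verify that the overwhelming-probability qualifier is inherited solely from the assumed reliability of the contract, cryptographic primitives, and bounded message delivery, rather than from any randomness in the players' strategies, so that the deterministic dominance argument suffices up to the negligible failure probability of the underlying environment.
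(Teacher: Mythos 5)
Your proposal is correct and takes essentially the same route as the paper's proof: both arguments combine (i) the observation that once a query reaches the contract, posting strictly dominates withholding for every non-adversarial node (payoff $-p_w$ versus at most $-p_w-\epsilon$, or $-p_s$), with (ii) the observation that a client lacking $k$ clues strictly prefers to take the action $\query$, since its payoff is positive whether or not the nodes then post ($p_f-p_c>0$ and $p_{\mathrm{comp}}-p_c>0$) while not querying yields $0$. The paper packages this as a proof by contradiction over an explicit payoff table and writes it out twice (for risk-neutral and for risk-averse utilities, the argument being purely ordinal), whereas you argue directly; the one caveat is that your appeal to ``backward induction'' is not available for plain Nash equilibria (off-path behavior is unconstrained), but your own point (ii) repairs this, since it guarantees the contract sub-game is reached on-path whenever clues are missing, which is exactly what pins down the nodes' equilibrium behavior there.
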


When $p_{\mathrm{comp}}>p_c$, the client is incentivized to send its query to the contract if it receives less than $k$ clues over the network. 
Then, the nodes post their clues to the contract to avoid slashing of their stakes, and the contract ensures security with overwhelming probability.

\begin{remark}
If $p_\mathrm{comp} \leq p_c$, for any contract that offers no rewards to the nodes, and for any $(\epsilon,0)$-adversary where $\epsilon \geq 0$, there exists a Nash equilibrium such that $4$-security is violated with overwhelming probability.
Consider the action profile, where the nodes do not send their clues to the client $\client$ over the network, and do not post their clues to the contract.
Given these actions, if $p_\mathrm{comp} \leq p_c$, $\client$'s payoff can at most be $0$, and the maximum payoff is achieved if $\client$ does not send a query to the contract, even when it does not receive clues over the network.
In this case, the normalized payoff of each node becomes $0$ as well, which is the maximum payoff attainable by any node. 
Hence, the nodes do not have any incentive to deviate from the action profile above, which constitutes a Nash equilibrium.
\end{remark}

We next show that the slashing function is $\epsilon$-punishment optimal.
\begin{theorem}
\label{thm:axioms-converse}
Consider a slashing function that is symmetric (A1), offers no rewards (A2), and satisfies $B$-minimal punishment for some $B<p_w$ (A4).
Then, for $k>1$, there exists a $(0,0)$-adversary $\mathcal{A}$ and a Nash equilibrium, where $4$-security is violated with non-negligible probability.
Thus, no compliant slashing function can satisfy $B$-minimal punishment for some $B<p_w$.
\end{theorem}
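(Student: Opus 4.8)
The plan is to exploit the free‑rider incentive that $B<p_w$ creates: posting a clue costs $p_w$, but by A4 a node that stays silent while the other nodes still reach the threshold is slashed by at most $B<p_w$, so a \emph{non‑pivotal} node strictly prefers to stay silent. First I would fix the adversary. Let $\adv$ corrupt $f=N-k-1$ nodes (assuming $N\ge k+1$) and keep them silent, leaving exactly $n=k+1$ rational nodes, so that an answer requires all but at most one of them to post. By symmetry (A1) the induced node–contract sub‑game is symmetric and it suffices to analyze symmetric profiles; A1 also lets me write a node's slashing as $r_1(m)$ if it posts and $r_0(m)$ if it is silent, where $m$ is the total number of posters. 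Here $r_1(m),r_0(m)\in[-p_s,0]$ by A2 and the stake bound, and $r_0(m),r_1(m)\ge -B$ whenever $m\ge k$ by A4.

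Next I would locate a symmetric mixed equilibrium, which exists by Nash's theorem for finite symmetric games; say each rational node posts with probability $\rho^*\in[0,1]$. I rule out $\rho^*=1$: if all $k+1$ post, a single deviation to silence leaves $k\ge k$ posters, so security still holds and the deviator's payoff improves from $r_1(k+1)-p_w\le -p_w$ to $r_0(k)\ge -B>-p_w$, a strict gain, contradicting equilibrium. If $\rho^*=0$, then because $k>1$ a lone deviator who posts still yields only one clue ($1<k$), security still fails, and the all‑silent profile is an equilibrium in which $\Tanswer$‑security fails with probability one; this is exactly the case where the hypothesis $k>1$ is used, and the claim is immediate.

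The remaining case is an interior equilibrium $\rho^*\in(0,1)$, where a rational node is indifferent. Writing $M\sim\mathrm{Bin}(k,\rho^*)$ for the number of posters among the other $k$ rational nodes, indifference reads $p_w=\mathbb{E}[\,r_1(M+1)-r_0(M)\,]$. I bound the right‑hand side from above by splitting on $M$: when $M\ge k$ (here only $M=k$) both actions keep security, so $r_1(M+1)-r_0(M)\le B$ by A4; for every $M\le k-1$ the difference is at most $p_s$ by the stake bound. This gives $p_w\le B\,(\rho^*)^k+p_s\,(1-(\rho^*)^k)$, hence $1-(\rho^*)^k\ge (p_w-B)/p_s=:\delta_0>0$, where positivity is precisely the hypothesis $B<p_w$ (and if $\delta_0>1$ no interior equilibrium exists, returning us to the $\rho^*=0$ case). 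Therefore $\rho^*\le (1-\delta_0)^{1/k}<1$, and the probability that the system produces fewer than $k$ clues is at least $\mathbb{P}[\mathrm{Bin}(k+1,\rho^*)=0]=(1-\rho^*)^{k+1}\ge \bigl(1-(1-\delta_0)^{1/k}\bigr)^{k+1}$, a positive constant independent of $\lambda$, hence non‑negligible.

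To lift this to a full‑game violation of $\Tanswer$‑security I would have the rational nodes also withhold their (free) network replies; since below‑threshold network replies do not affect any payoff when the others are silent, withholding is a best response, the client is forced to $\query$ the contract, and the sub‑game above then produces fewer than $k$ clues with non‑negligible probability. Because a compliant function must satisfy A3 — $\Tanswer$‑security in \emph{every} Nash equilibrium against \emph{every} $(0,0)$‑adversary — this equilibrium shows that no compliant function can have $B<p_w$. I expect the main obstacle to be the case split on $\rho^*$ combined with the indifference bound: the slashing values in the failure regime ($m<k$) are left entirely free by A1, A2, A4, so the argument must assume nothing about them beyond the range $[-p_s,0]$, which is exactly why I route the quantitative step through an upper bound on $\mathbb{E}[r_1(M+1)-r_0(M)]$ rather than solving for $\rho^*$. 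A secondary point is the risk‑averse case $U(x)=x^\nu$: monotonicity of $U$ preserves the strict inequality ruling out $\rho^*=1$, but the indifference bound must be re‑derived in utility units with the baseline $C=p_s+p_w$, so the resulting constant will depend on $\nu$ and $C$ while still being strictly positive whenever $B<p_w$.
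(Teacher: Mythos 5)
Your proof is correct and takes essentially the same route as the paper's: a $(0,0)$-adversary whose corrupted nodes stay silent, a symmetric mixed equilibrium of the node--contract sub-game in which the free-rider incentive created by $B<p_w$ forces the posting probability strictly below one (hence a constant, non-negligible failure probability), and a lift to the dynamic game using $k>1$ so that a lone network reply cannot change the client's behavior. Your explicit binomial indifference bound is a more quantitative rendering of the paper's bounds on $e_0,e_1$ via $r^*_{-1}$; the only loose point is your aside that the all-silent profile is an equilibrium ``because security fails either way'' (false in general --- under the paper's own contract a poster avoids the $p_s$ slash at cost $p_w$), but this is harmless since in your case analysis that profile is already an equilibrium by appeal to Nash's theorem.
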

When $B<p_w$, punishment for a node that does not post its clue to the contract while the other nodes send their clues is smaller than the cost of posting the clue.
This leads to a free-rider problem, and results in an equilibrium with a non-negligible failure probability for security, where each non-adversarial node trusts the others to send clues to the contract.

Finally, we prove security-optimality:
\begin{theorem}
\label{thm:security-two-risk-averse}
The slashing function of Section~\ref{sec:contract} is security optimal.
\end{theorem}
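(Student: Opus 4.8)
The plan is to reduce Theorem~\ref{thm:security-two-risk-averse} to the one-round subgame between the nodes and the contract, and then to bound, for \emph{every} compliant contract, the bribe the adversary must spend to sustain a failure. Since $(p_0,q)$-tolerance only concerns what happens once a query has reached the contract at slot $t$, I focus on the subgame in which the $f$ adversarial nodes withhold their clues and each of the $N-f$ rational nodes chooses $\place$ or $\neg\place$ at slot $t+1$. Because the adversarial nodes contribute no clues, security here means that at least $k$ of the rational nodes choose $\place$ (recall $f \le N-k$), so a failure requires at least $N-f-k+1$ rational nodes to play $\neg\place$.

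First I would pin down the tolerance $q_0(p_0)$ that our own contract attains, which I take as the definitional witness in Definition~\ref{def:punishment-optimal}. The single observation driving everything is the per-node incentive gap. Under the contract of Section~\ref{sec:contract} a node that plays $\place$ nets $-p_w$ regardless of the outcome ($f_i=0$ minus the posting cost), whereas a node that plays $\neg\place$ in a failing profile nets $-p_s$ plus any bribe it received. Hence a node is willing to abandon $\place$ in a failing equilibrium only if its bribe covers the full gap $p_s-p_w$, so the adversary cannot force failure with certainty unless $p_0 \ge (N-f-k+1)(p_s-p_w)$; for smaller budgets the largest failure probability it can sustain (via partial bribes that make some nodes indifferent and mix) is exactly $1-q_0(p_0)$. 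For risk-averse nodes the baseline $C=p_s+p_w$ makes the posting utility $U(p_s)$ and the not-posting-and-failing utility $U(p_w+b)$; monotonicity of $U$ reproduces the same threshold $b \ge p_s-p_w$ for pure deviations, and concavity only enters the lotteries generated by mixed profiles.

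The heart of the argument is the matching upper bound: for every compliant $f'$ and every $p_0$, if $f'$ is $(p_0,q')$-tolerant then $q' \le q_0(p_0)$. I would show that our contract is extremal among compliant functions. By No Reward (A2), a node's net payoff for $\place$ under any compliant $f'$ is at most $0-p_w=-p_w$, the value our contract attains; by the collateral bound $f'_i(\mathbf{x}) \ge -p_s$, a node's net payoff for $\neg\place$ in a failing profile is at least $-p_s$, again attained by our contract. Consequently the deviation gap $[\text{payoff of }\place]-[\text{payoff of }\neg\place\text{ while failing}]$ under $f'$ is at most $p_s-p_w$, so $f'$ makes $\neg\place$ weakly \emph{more} attractive than our contract does, node by node. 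Therefore every bribe allocation that sustains a given (pure or mixed) failure profile under our contract also sustains it under $f'$ with no extra budget, so the set of failure probabilities the adversary can realize under $f'$ contains the one it realizes under our contract, and the worst-case equilibrium for $f'$ has success probability at most $q_0(p_0)$. Monotonicity of $U$ carries the same domination to risk-averse nodes.

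The step I expect to be the main obstacle is transferring this pointwise payoff domination to the probabilistic quantity $q$ in the presence of mixed-strategy equilibria, which is precisely where the $q<1$ regime and the risk-neutral/risk-averse distinction live. I must argue that when the budget is insufficient for a deterministic failure, the adversary's best mixed strategy under $f'$ induces a failure probability at least as large as under our contract---a stochastic-dominance statement about the number of $\place$-players across the two contracts. For risk-averse nodes this is a comparison between lotteries over the outcomes $\{\place,\ (\neg\place,\text{success}),\ (\neg\place,\text{fail})\}$, which I would handle by noting that our contract simultaneously maximizes the posting payoff and minimizes the failing not-posting payoff, so that for any fixed mixed profile of the other nodes the expected-utility temptation to deviate to $\neg\place$ is smallest under our contract; Jensen's inequality affects both contracts in the same direction and does not overturn the pointwise domination. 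Combining the witness $q_0(p_0)$ with this upper bound gives security-optimality.
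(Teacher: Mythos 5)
Your overall architecture matches the paper's: you work in the one-shot subgame at the contract, take the tolerance attained by the contract of Section~\ref{sec:contract} as the witness, and then argue that the same bribe budget sustains at least the same failure probability under any compliant $f'$ because the contract of Section~\ref{sec:contract} is extremal among compliant functions --- it pays the maximum allowed by A2 for $\place$ (namely $0$) and the minimum allowed by the collateral bound for $\neg\place$ in a failing profile (namely $-p_s$). The paper does exactly this, encoding your $q_0(p_0)$ as the value of an explicit optimization problem over bribe allocations and correlated withholding events, and then replaying the optimizing adversary against an arbitrary compliant $f'$.

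However, there is a genuine gap at precisely the step you flag as the main obstacle, and your proposed fix is not the right tool. For risk-averse nodes the bribe $p^i_b$ enters \emph{inside} the utility, so ``accept and follow'' versus ``reject and post'' compares lotteries evaluated at different wealth levels, and pointwise domination of outcome payoffs does not transfer by itself. Jensen's inequality (which compares $\mathbb{E}[U(X)]$ with $U(\mathbb{E}[X])$) is not what is needed; the needed fact is a wealth-translation property of concave power utilities: for $a \geq b \geq c \geq 0$,
\[
b^\nu - (b-c)^\nu \;\geq\; a^\nu - (a-c)^\nu ,
\]
i.e., any penalty $f'_i(\mathbf{x}) \leq 0$ that a compliant contract imposes hurts \emph{less} once the node's wealth is augmented by the bribe. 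This is the content of the paper's Lemma~\ref{lem:sublemma-inequality-risk-averse}, and it is what shows that, given the constraint on the dictated withholding probability, accepting the bribe yields expected utility at least $e_1$ (the value of rejecting and posting) under \emph{every} compliant $f'$. Without it, your claim that ``every bribe allocation that sustains a failure profile under our contract also sustains it under $f'$'' is unproven for $\nu<1$. Two further points you should make explicit. First, your domination argument compares only the $\place$ and $\neg\place$-while-failing outcomes and ignores $\neg\place$-while-succeeding, where a compliant $f'$ may punish far more harshly (up to $-p_s$) than the contract of Section~\ref{sec:contract} does ($-p_w-\epsilon$); this is harmless only because the adversary's scheme is correlated, so an accepting node never withholds in a successful profile, and harsher punishment there only degrades the reject-and-withhold deviation --- but that requires an argument. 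Second, the witness equilibrium cannot be built from independently mixing indifferent nodes: failure requires all $N-f-k+1$ bribed nodes to withhold simultaneously, so independent mixing yields only a product-small failure probability; the adversary must dictate correlated withholding (all bribed nodes withhold together with probability $q^*_{p_0,\nu}$), which is how the paper's adversary is constructed.
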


Consider the sub-game, where the contract receives a query at some slot $t$.
For a given contract and utility function $U(x)=x^\nu$, let $q^{\mathcal{A}}_{\mathrm{v}}$ denote the probability that given a $p_0$-adversary $\mathcal{A}$, there are less than $k$ valid clues in the contract at slot $t+1$ in the Nash equilibrium with the largest probability of failure. 
Then, the proof of Theorem~\ref{thm:security-two-risk-averse} for risk-neutral nodes follow from Theorem~\ref{thm:security-two}:
\begin{theorem}
\label{thm:security-two}
Suppose $p_0 < (N-f-k+1)(p_s-p_w)$ and the nodes are risk-neutral with the utility function $U(x)=x$.
Then, for any $p_0$-adversary $\mathcal{A}$, the slashing function of Section~\ref{sec:contract} satisfies
\begin{IEEEeqnarray*}{C}
q^{\mathcal{A}}_{\mathrm{v}} \leq q^* = \frac{p_0}{(N-f-k+1)(p_s-p_w)}
\end{IEEEeqnarray*}
Moreover, there exists a $p_0$-adversary $\mathcal{A}$ such that for any compliant slashing function, $q^{\mathcal{A}}_{\mathrm{v}} \geq q^*$.
\end{theorem}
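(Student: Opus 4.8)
\emph{Overview.} Both directions reduce to the one-shot subgame in which, after the query appears at slot $t$, each rational node decides at slot $t+1$ whether to \place{} a clue (paying $p_w$) or stay silent. Two features of the contract drive everything: a node that posts gets slashing payoff $0$ no matter what the others do, so its net payoff is exactly $-p_w$ (and it forfeits any bribe); and a silent node is slashed $p_s$ when the remaining clues number $<k$ and only $p_w+\epsilon$ otherwise. Throughout I take the adversarial nodes to be silent, since making one respond only raises the clue count and lowers the failure probability, hence is never part of the worst equilibrium.

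\emph{Upper bound ($q^{\mathcal{A}}_{\mathrm{v}}\le q^*$, specific contract).} Fix a $p_0$-adversary and its worst equilibrium, allowing it to correlate the nodes through a public signal $s$; conditioned on $s$ the nodes randomize independently. For a rational node $i$ that is silent with positive probability, let $\alpha_i$ be the conditional probability that the other rational nodes already supply $\ge k$ clues, and let $b_i$ be the bribe promised to $i$. Comparing silence against posting yields the best-response inequality $b_i \ge (p_s-p_w)(1-\alpha_i)+\epsilon\alpha_i \ge (p_s-p_w)(1-\alpha_i)$. The combinatorial input is that on the failure event at least $N-f-k+1$ rational nodes are silent, and for a silent node failure coincides with ``others supply $<k$'', an event of conditional probability $1-\alpha_i$ independent of $i$'s own coin. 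Multiplying the best-response inequality by $\Pr[i\text{ silent}\mid s]$, summing over rational $i$, and averaging over $s$, the left-hand side is the adversary's expected payout, at most $p_0$, while the right-hand side equals $(p_s-p_w)\,\mathbb{E}[\#\{\text{silent }i:\text{others}<k\}]\ge (p_s-p_w)(N-f-k+1)\,q^{\mathcal{A}}_{\mathrm{v}}$. Rearranging gives $q^{\mathcal{A}}_{\mathrm{v}}\le q^*$.

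\emph{Lower bound ($q^{\mathcal{A}}_{\mathrm{v}}\ge q^*$, every compliant contract).} I exhibit one adversary that works against all compliant $f$. Fix any set $T$ of $N-f-k+1$ rational nodes and toss a public coin landing ``attack'' with probability $q^*$, which is well-defined since the hypothesis $p_0<(N-f-k+1)(p_s-p_w)$ forces $q^*<1$. In the attack branch, promise each node of $T$ a bribe $p_s-p_w$ conditioned on silence and offer nothing elsewhere; the expected payout is $q^*(N-f-k+1)(p_s-p_w)=p_0$, within budget. The decisive point, valid for every compliant $f$, is that silence is a weak best response for a node of $T$ regardless of the others: posting yields $f_i(\cdot)-p_w\le -p_w$ by no reward (A2), whereas silence yields $f_i(\cdot)+(p_s-p_w)\ge -p_s+(p_s-p_w)=-p_w$ by the stake bound $f_i\ge -p_s$. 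Hence the attack branch has an equilibrium in which $T$ is silent, the untargeted nodes playing any equilibrium of their residual game (which exists by Nash's theorem); since $|T|=N-f-k+1$, at most $k-1$ clues can appear, so this branch fails surely and the overall profile fails with probability $q^*$. As $q^{\mathcal{A}}_{\mathrm{v}}$ is measured on the worst equilibrium, $q^{\mathcal{A}}_{\mathrm{v}}\ge q^*$.

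\emph{Main obstacle.} The hard part is the aggregation in the upper bound: each node's incentive inequality lives in its own conditional world (node $i$'s bribe is pinned down by $\alpha_i$, the chance the \emph{others} already reach $k$), whereas the budget and the failure probability are unconditional. Conditioning on the adversary's signal to restore independence, and then exploiting that failure forces $N-f-k+1$ nodes to be silent \emph{simultaneously}, is what fuses these local constraints into the global bound; keeping every inequality pointed the right way under both correlation and mixed play is the delicate step. A secondary, boundary subtlety is that $b_i=p_s-p_w$ renders silence only weakly optimal, which I handle by breaking ties in the adversary's favor (equivalently, bribing $p_s-p_w+\delta$ and letting $\delta\to 0$).
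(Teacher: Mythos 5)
Your upper bound is correct and is, in substance, the paper's own argument: the paper also fuses the per-node incentive constraints with the adversary's budget, merely packaging the aggregation as a contradiction (if the failure probability exceeded $q^*$, the rational nodes' total expected payoff would fall below $-(N-f)p_w$, so some node would earn strictly less than the $-p_w$ it can guarantee by posting). Your inequality $b_i \geq (p_s-p_w)(1-\alpha_i)$, weighted by $\Pr[i\text{ silent}\mid s]$ and summed, is the same accounting. (One small caveat: in the paper's model a node that accepts a bribe is \emph{committed} to the adversary's prescribed strategy, so the per-signal interim deviation you invoke is not literally available; but the ex ante acceptance-rationality inequality yields the identical aggregate bound, so this is cosmetic.)

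The lower bound, however, has a genuine gap: your adversary is not a $p_0$-adversary in the paper's model. You promise each of the $N-f-k+1$ targeted nodes a bribe of $p_s-p_w$, payable in the attack branch, so the sum of promised bribes is $(N-f-k+1)(p_s-p_w) > p_0$ (strict, since the hypothesis forces $q^*<1$), and you justify feasibility only through the \emph{expected} payout $q^*(N-f-k+1)(p_s-p_w)=p_0$. But the model endows the adversary with a fixed supply of $p_0$ coins, and the paper's formalization of a $p_0$-adversary requires the offered bribes to satisfy $\sum_i p^i_b \leq p_0$. Concretely, whenever your adversary has to pay anything at all, it owes more than it possesses, so its promises cannot be simultaneously honored and are not credible to rational nodes; the $\delta\to 0$ tie-breaking variant only worsens this. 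The paper's construction avoids exactly this problem: it pays each targeted node a \emph{deterministic} bribe $p_0/(N-f-k+1)$ (summing to exactly $p_0$) in exchange for committing to the correlated strategy ``all withhold with probability $q^*$, otherwise all post''; the randomness sits in the prescribed action, not in the payment. The price of this fix is that your two-line dominance argument (silence $\geq -p_w \geq$ posting, using only A2 and $f_i \geq -p_s$) no longer applies pointwise, because an accepting node is paid even on the branch where it posts; the paper instead compares the ex ante value of accepting against rejecting, via a case analysis on the symmetric conditional payoffs $e_0,e_1$, using the fact (via Theorem~\ref{thm:axioms-converse}) that every compliant function must punish a non-responsive node by at least $p_w$. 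You would need to carry out this, or an equivalent, analysis to repair the second half of the theorem.
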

The $p_0$-adversary $\mathcal{A}$ of Theorem~\ref{thm:security-two} offers a bribe of $\frac{p_0}{N-f-k+1}$ to $N-f-k+1$ non-adversarial nodes, \eg, $\node_i$, $i \in [N-f-k+1]$.
In return, it requests these nodes to collectively withhold their clues from the contract with probability $q^*$.

\begin{remark}
\label{remark:strong-adversary}
If $p_0 \geq (N-f-k+1)(p_s-p_w)$, there exists a Nash equilibrium, where $4$-security is violated with overwhelming probability. 
Adversary offers a payoff of $p_s-p_w$ to each of the $N-f-k+1$ nodes, and requests them to withhold their clues from the contract.
In the equilibrium, the offer is accepted and the nodes do not post their clues to the contract.
\end{remark}

\begin{remark}
Sending repeated queries to the contract does not reduce the failure probability by more than a linear factor in latency. 
Suppose the client $\client$ is allowed to send the same query to the contract up to $\ell$ times.
Then, if there are less than $k$ valid clues in the contract at slot $t+1$, $\client$ might want to repeat the sub-game up to $\ell$ times with the hope of eventually learning the answer to its query.
In this case, the adversary $\mathcal{A}$ can offer a payoff of $\frac{p_0}{N-f-k+1}$ to the nodes $\node_i$, $i \in [N-f-k+1]$, and in return, ask them to collectively withhold their clues in \emph{all} of the games with probability $q^*/\ell$.
As in the proof of Theorem~\ref{thm:security-two}, this adversary ensures $q^{\mathcal{A}}_{\mathrm{v}} \geq q^*/\ell$ for any compliant slashing function.
\end{remark}

Finally, we characterize the failure probability for the optimal contract.
Suppose the contract of Section~\ref{sec:contract} is $(p_0,1-q^*_{p_0,\nu})$-tolerant per Definition~\ref{def:tolerant}, where the failure probability $q^*_{p_0,\nu}$ is depends on the total bribe $p_0$ and the nodes' utility function $U(x)=x^\nu$, \eg, $q^*_{p_0,1} = q^*$ by Theorem~\ref{thm:security-two}. 
Although Theorem~\ref{thm:security-two-risk-averse} proves that the contract of Section~\ref{sec:contract} is security optimal, unlike Theorem~\ref{thm:security-two}, its proof does not provide an explicit expression for $q^*_{p_0,\nu}$ when $\nu<1$, \ie, for risk-averse nodes.
Instead, in Appendix~\ref{sec:appendix-risk-averse-bounds}, we identify an optimization problem whose solution gives $q^*_{p_0,\nu}$.
As the optimization problem is not convex for $\nu < 1$,
in lieu of solving the problem, we provide bounds on $q^*_{p_0,\nu}$ that characterize its asymptotic behavior in terms of $\nu$, $p_0$ and $N$.
\subsection{Analysis of the Dynamic Game}
\label{sec:analysis-game}

In this section, we analyze the interaction among a rational client and the nodes during the dynamic game.
For a specified slashing function, let $q(p_0,\nu)$ denote the maximum probability that $4$-security is violated in the Nash equilibrium with the largest probability of failure, across all $p_0$-adversaries.

Theorem~\ref{thm:security-one} shows that when $k>1$, the slashing function of Section~\ref{sec:contract} achieves the minimum $q(p_0,\nu)$
among all compliant slashing functions, and this probability equals $q^*_{p_0,\nu}$.
Proofs of the subsequent theorems are presented in Appendix~\ref{sec:appendix-proofs-dynamic-game}.

\begin{theorem}
\label{thm:security-one}
Consider $(p_0,p_1)$-adversaries such that $p_1 < p_{\mathrm{comp}}-p_c$ and $p_0 < (N-f-k+1)(p_s-p_w)$.
Then, for the slashing function of Section~\ref{sec:contract}, it holds that $q(p_0,\nu) \leq q^*_{p_0,\nu}$.

Moreover, given any compliant slashing function, if $k>1$, then, there exists a $(p_0,0)$-adversary and a subgame perfect equilibrium such that $4$-security is violated in the equilibrium with probability $q^*_{p_0,\nu}$.
\end{theorem}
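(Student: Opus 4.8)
The plan is to prove the two directions separately, reducing each to the contract sub-game of Theorem~\ref{thm:security-two} (and to the risk-averse analysis behind Theorem~\ref{thm:security-two-risk-averse} when $\nu<1$) by pinning down the client's decision to call the contract and the nodes' decision to reply over the network. For the bound $q(p_0,\nu)\leq q^*_{p_0,\nu}$, I would first show that in every Nash equilibrium the client queries the contract whenever it has collected fewer than $k$ correct clues over the network by the timeout. The only adversarial action available to the client is $\neg\query$, so the most it can collect for abstaining is the bribe $p_1$, whereas querying guarantees a net payoff of at least $p_{\mathrm{comp}}-p_c$ even in the worst case in which it never learns the answer, since the contract then compensates it by $p_{\mathrm{comp}}$ against a cost of only $p_c$. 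Because $p_1<p_{\mathrm{comp}}-p_c$, querying strictly dominates, so the client queries on path. I would then observe that a node's payoff in the contract phase is a function only of the number of clues posted to the contract relative to the threshold $k$ and of the bribe it accepts, independent of how many clues the client already holds from the network; hence the interaction entered after a query is exactly the sub-game of Theorem~\ref{thm:security-two}, and since the contract is $(p_0,1-q^*_{p_0,\nu})$-tolerant, fewer than $k$ clues reach it with probability at most $q^*_{p_0,\nu}$ in every equilibrium (here I use $p_0<(N-f-k+1)(p_s-p_w)$). Finally, $4$-security fails only if the client ends with fewer than $k$ clues in total, which forces the contract to hold fewer than $k$ clues as well; this failure event is contained in the sub-game failure event, so its probability is at most $q^*_{p_0,\nu}$. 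Spending part of the node budget in the network phase cannot help $\adv$, as network replies only raise the client's total, so this bound holds in every equilibrium.

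For the matching converse I would exhibit, for an arbitrary compliant slashing function with $k>1$, a $(p_0,0)$-adversary together with a subgame perfect equilibrium in which $4$-security fails with probability $q^*_{p_0,\nu}$. The adversary is the sub-game adversary furnished by the converse in Theorems~\ref{thm:security-two} and~\ref{thm:security-two-risk-averse}: it bribes the nodes $\node_1,\ldots,\node_{N-f-k+1}$ to withhold their contract clues in a coordinated manner with probability $q^*_{p_0,\nu}$, and it additionally conditions the bribe on these nodes not replying over the network. The candidate equilibrium has no rational node reply over the network, has the client query the contract (optimal because $p_{\mathrm{comp}}>p_c$ and there is no client bribe), and, in the contract phase, plays the sub-game equilibrium guaranteed by those theorems, so that at most $k-1$ clues reach the contract with probability $q^*_{p_0,\nu}$; as the client holds no network clues, it fails to learn the answer exactly in this event. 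The $k-1$ non-bribed rational nodes strictly prefer to post, since withholding would cost them $p_s$ rather than $p_w$, and the $f$ corrupt nodes are driven by $\adv$; I would assemble these best responses by backward induction over the four slots.

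The crux, and the step where $k>1$ is indispensable, is verifying that no bribed node strictly gains by deviating to reply over the network. When $k>1$, one extra network reply still leaves the client short of the $k$ clues it needs over the network, so the client queries the contract regardless; the deviating node therefore still faces the contract phase, where, having forfeited the conditional bribe, its best response is to post for a payoff of $-p_w$, exactly matching the expected payoff the withholding scheme yields with the bribe. The deviation is thus not profitable and the failure equilibrium survives. This collapses precisely at $k=1$: there a lone network reply already delivers the full answer, the client does not query, the contract phase with its slashing is never entered, and the deviating node avoids both slashing and the posting cost for a payoff of $0>-p_w$, a strict improvement that would make the withholding scheme unsustainable.

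I expect the network-phase deviation analysis to be the main obstacle: one must track how a unilateral network reply propagates through the combined network-plus-contract threshold that governs the client's security versus the contract's own slashing threshold $k$, and confirm for $k>1$ that the reply neither prevents the client's query nor improves the deviator's slashing outcome, so that the bribe stays incentive compatible. A secondary difficulty is that for $\nu<1$ the failure probability $q^*_{p_0,\nu}$ is only characterized implicitly, so the reductions must be phrased to invoke Theorem~\ref{thm:security-two-risk-averse} and the optimization bounding $q^*_{p_0,\nu}$ rather than the closed form $q^*$ available only at $\nu=1$.
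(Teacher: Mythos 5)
Your proposal is correct and follows essentially the same route as the paper's proof: the upper bound via showing that $p_1 < p_{\mathrm{comp}}-p_c$ forces the client to query whenever it lacks $k$ network clues and then invoking the sub-game tolerance bound $q^*_{p_0,\nu}$, and the converse via the sub-game adversary of Theorems~\ref{thm:security-two}/\ref{thm:security-two-risk-averse} embedded in a subgame perfect equilibrium where no node replies over the network and, crucially, $k>1$ makes a unilateral network reply irrelevant to the client's decision to query. The only cosmetic difference is that you condition the bribe on network silence (forcing a forfeit-and-best-respond deviation argument, where your ``exactly matching $-p_w$'' should really be the weak inequality supplied by the accept-versus-reject dominance in those theorems), whereas the paper leaves the bribe unconditional so that a network reply is simply payoff-neutral.
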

Theorem~\ref{thm:security-one} proves that even if the adversary does not offer any bribe to the client, \ie, $p_1=0$, if $k>1$, there exists a subgame perfect equilibrium where security is violated with the maximum probability $q^*_{p_0,\nu}$.

\begin{remark}
\label{rem:small-p1}
If $p_1 > p_{\mathrm{comp}}-p_c$, there exists a Nash equilibrium, where $4$-security is violated with overwhelming probability. 
Suppose the adversary $\mathcal{A}$ asks the nodes to \emph{not} send their clues to $\client$ or to the contract, and requests $\client$ to \emph{not} post its query to the contract.
If $\client$ never sends its query to the contract, nodes achieve a strictly better utility by accepting the adversary's offer.
Similarly, $\client$ cannot increase its utility by deviating from the adversarial action.
This is because, if $\client$ rejects its bribe and sends a query to the contract, given the nodes' actions, its payoff becomes $p_{\mathrm{comp}}-p_c$, less than the bribe $p_1$.
Hence, given $\mathcal{A}$, the specified actions indeed constitute a Nash equilibrium.
\end{remark}

Theorem~\ref{thm:security-three} analyzes the game when $k=1$.
\begin{theorem}
\label{thm:security-three}
Consider any compliant slashing function and $(p_0,p_1)$-adversaries such that $p_1 < p_{\mathrm{comp}}-p_c$ and $p_0 < (N-f-k+1)(p_s-p_w)$.
Suppose there are $N$ nodes, and $k=1 \leq N-f$.
Then, if $p_1$ satisfies
\begin{IEEEeqnarray}{C}
\label{eq:p1-constraint}
(1-q^*_{p_0,\nu})(p_f-p_c+p_1)^\nu + q^*_{p_0,\nu}(p_f-p_c+p_1+p_{\mathrm{comp}})^\nu \geq (p_f)^\nu,
\end{IEEEeqnarray}
there exists a $(p_0,p_1)$-adversary and a subgame perfect equilibrium such that $4$-security is violated with probability at least $q^*_{p_0,\nu}$, \ie, $q(p_0,\nu) \geq q^*_{p_0,\nu}.$
\end{theorem}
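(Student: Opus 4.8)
The plan is to exhibit an explicit $(p_0,p_1)$-adversary together with a subgame perfect equilibrium in which the contract receives fewer than $k$ clues — the event the theorem counts as a $4$-security violation — with probability $q^*_{p_0,\nu}$. The construction reuses the contract-subgame adversary of Theorem~\ref{thm:security-two}: it bribes the $N-f-k+1=N-f$ non-adversarial nodes with $p_0/(N-f)$ each and asks them to collectively withhold from the contract with probability $q^*_{p_0,\nu}$. The ingredient specific to $k=1$ is that the adversary \emph{also} has these nodes send their clues to the client over the network, and bribes the client with $p_1$ to post its query to the contract even though the client already holds the answer. This detour is forced by $k=1$: a single network clue already answers the client, so from any ``withhold over the network'' instruction a node could profitably deviate by sending its costless clue and letting the client skip the contract; since $p_0<(N-f)(p_s-p_w)$, the bribe is too small to make network-withholding individually rational. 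The adversary therefore embraces the network delivery and instead pays the client to invoke the contract. The candidate equilibrium is: each non-adversarial node sends its clue over the network and then withholds from the contract with probability $q^*_{p_0,\nu}$, while the client accepts $p_1$ and posts its query; along this path the client always obtains the answer (payoff $p_f$), yet with probability $q^*_{p_0,\nu}$ all $N-f$ bribed nodes together with the $f$ corrupt nodes withhold, so fewer than $k$ clues reach the contract.

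I would verify the profile in two stages. First, once the query is posted the induced subgame is exactly the one of Theorem~\ref{thm:security-two} (its risk-averse analog for general $\nu$), so withholding with probability $q^*_{p_0,\nu}$ is a Nash equilibrium there and the contract-failure probability is $q^*_{p_0,\nu}$; this fixes best responses on every subgame after the query. Second comes the client's incentive to post its query, which is exactly what the hypothesis on $p_1$ governs. Given the nodes' strategy, posting yields the network answer in every case, costs $p_c$, collects $p_1$, and additionally earns $p_{\mathrm{comp}}$ with probability $q^*_{p_0,\nu}$; its expected utility is the left-hand side of~\eqref{eq:p1-constraint}, whereas the only profitable alternative — keep the free network answer and not pay $p_c$ — gives $(p_f)^\nu$, the right-hand side. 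Hence~\eqref{eq:p1-constraint} is precisely the client's participation constraint, and under it posting is a best response. I would also record that the separate hypothesis $p_1<p_{\mathrm{comp}}-p_c$ excludes the degenerate ``take the bribe and stay silent'' equilibrium of Remark~\ref{rem:small-p1}, and that the two bounds on $p_1$ leave a nonempty range (for $\nu=1$ the constraint reduces to $p_1\geq p_c-q^*_{p_0,\nu}\,p_{\mathrm{comp}}$).

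The main obstacle is the subgame perfection of the coupled node/client behavior, which is the genuinely new difficulty relative to the $k>1$ case of Theorem~\ref{thm:security-one} (where an honest client with $p_1=0$ already has too few network clues, namely $k-1$, to avoid the contract). Here the two halves of the profile lean on each other: the nodes send over the network only because the client, kept complicit by $p_1$, still invokes the contract, and the client pays $p_c$ only because the network clue secures its outside option $p_f$ while a contract failure pays $p_{\mathrm{comp}}$. I must therefore confirm at every information set that no node gains by posting off-path or by suppressing its network clue, and that the client has no profitable deviation after observing the clue, then assemble these local best-response checks into a genuine subgame perfect (not merely Nash) equilibrium; the tightness of the realized probability $q^*_{p_0,\nu}$ is inherited directly from Theorem~\ref{thm:security-two}.
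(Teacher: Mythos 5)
Your construction does not establish the theorem because the event you count as a failure is not a violation of $4$-security. By the definition in Section~\ref{sec:model}, $4$-security is satisfied whenever the client receives $k$ or more correct clues \emph{either over the network or through the contract} by the beginning of slot $4$; ``fewer than $k$ clues reach the contract'' is not, by itself, a security violation. In your proposed equilibrium every non-adversarial node sends its clue to $\client$ over the network and $k=1$, so the client always learns the answer and $4$-security holds with probability $1$; the contract-level shortfall occurring with probability $q^*_{p_0,\nu}$ is irrelevant to the security metric. Your profile therefore shows only $q(p_0,\nu)\geq 0$, not $q(p_0,\nu)\geq q^*_{p_0,\nu}$, and no repair that keeps the nodes answering over the network can work: any such equilibrium satisfies security by construction.

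The paper's proof goes the opposite way, and the step you dismissed as impossible is exactly the one that condition~\eqref{eq:p1-constraint} rescues. The adversary bribes $\client$ to take the action $\query$ \emph{regardless} of what it receives over the network, and asks the nodes to withhold their clues over the network in addition to following the Theorem~\ref{thm:security-two-risk-averse} behavior in the contract subgame. Your objection---that a node could profitably deviate by sending a free network clue, ``letting the client skip the contract''---fails precisely because of \eqref{eq:p1-constraint}: in the off-path subgame where the client holds a network answer, accepting the bribe and querying yields expected utility at least $(1-q^*_{p_0,\nu})(p_f-p_c+p_1)^\nu + q^*_{p_0,\nu}(p_f-p_c+p_1+p_{\mathrm{comp}})^\nu$, which by \eqref{eq:p1-constraint} is at least the $(p_f)^\nu$ obtainable by declining, so the client queries the contract both on and off the equilibrium path. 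A node's network clue consequently changes neither the client's behavior nor the node's own payoff, making network-withholding a (weak) best response and the whole profile subgame perfect. With the nodes silent over the network and the contract subgame failing with probability at least $q^*_{p_0,\nu}$ for any compliant slashing function (by the optimality argument of Theorem~\ref{thm:security-two-risk-averse}), $4$-security itself fails with probability at least $q^*_{p_0,\nu}$. In short, the client bribe is not a device to get the query posted while the nodes cooperate over the network; it is a device to make the posting decision insensitive to network clues, which is what sustains the nodes' network-withholding in equilibrium.
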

Via Theorems~\ref{thm:security-one} and~\ref{thm:security-three}, for all values of $k$ and all $(p_0,p_1)$-adversaries with a sufficiently large $p_1$, the slashing function of Section~\ref{sec:contract} achieves the minimum possible failure probability for $4$-security among all compliant slashing functions.
If $p_1$ satisfies formula~\eqref{eq:p1-constraint}, then the adversary can incentivize $\client$ to send a query to the contract regardless of whether $\client$ received clues over the network. 
This in turn discourages the nodes from sending clues over the network, and helps sustain an equilibrium where security rests solely on the clues sent to the contract.
In this context, slashing function of Section~\ref{sec:contract} minimizes the failure probability for security, which becomes $q^*_{p_0,\nu}$.

On the other hand, if $p_1$ is too small to satisfy formula~\eqref{eq:p1-constraint}, $p_0$ is sufficiently small (but non-zero) and $k=1$, given the optimal slashing function of Section~\ref{sec:contract}, $4$-security can be satisfied, without any query sent to the contract, with probability exceeding $q^*_{p_0,\nu}$.
This prevents the adversary from making the contract the default method for retrieving the data and bloating the blockchain.

\begin{theorem}
\label{thm:security-contract-not-used}
Consider the slashing function of Section~\ref{sec:contract} and $(p_0,p_1)$-adversaries such that $p_1 < p_{\mathrm{comp}}-p_c$, $\ p_0 < (N-f)p_w$,
and $p_1$ satisfies
\begin{IEEEeqnarray*}{C}
(1-q^*_{p_0,\nu})(p_f-p_c+p_1)^\nu + q^*_{p_0,\nu}(p_f-p_c+p_1+p_{\mathrm{comp}})^\nu < (p_f)^\nu.
\end{IEEEeqnarray*}
Suppose there are $N$ nodes, and $k=1 \leq N-f$.
Then, $4$-security is satisfied with overwhelming probability in all Nash equilibria, without the client sending its query to the contract.
\end{theorem}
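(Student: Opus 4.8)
The plan is to analyze the dynamic game by backward induction and to reduce the whole statement to one structural claim: in every Nash equilibrium, with overwhelming probability at least one non-adversarial node plays $\reply$, so that the client already holds a clue by the network timeout. Once this holds, the client has its answer over the network, $4$-security is immediate, and---having the answer already---the client has no incentive to play $\query$, which yields the ``contract not used'' conclusion. The decisive feature of the regime $k=1$ is a \emph{unanimity requirement}: since one clue suffices, to deny the client a network answer the adversary must drive \emph{all} $N-f$ non-adversarial nodes to $\neg\reply$ simultaneously. The proof thus reduces to showing that such unanimous withholding cannot be sustained when $p_0 < (N-f)p_w$.

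First I would pin down the client's play from its two hypotheses. The bound $p_1 < p_{\mathrm{comp}}-p_c$ makes the contract threat credible: a client that reaches the query decision without a clue secures a net payoff of at least $p_{\mathrm{comp}}-p_c$ by playing $\query$ (it is compensated even when the contract stage fails), strictly more than any bribe $p_1$ it could collect for $\neg\query$; hence it queries whenever it lacks a clue. Conversely, the failure of formula~\eqref{eq:p1-constraint} says precisely that the ``accept the bribe and query regardless'' strategy nets the client strictly less than $(p_f)^\nu$, the utility of keeping a free network answer; hence a client that already holds a clue strictly prefers not to query. Together these give: in any equilibrium the client plays $\query$ if and only if it received no clue over the network, so that a node's $\reply$ genuinely suppresses the contract stage.

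Next I would compute the continuation value of the contract sub-game, reusing the reasoning behind Lemma~\ref{lem:security-zero} and Theorem~\ref{thm:security-two}. The quantitative fact I need is that, once the contract stage is reached, every node's payoff drops by at least $p_w$ relative to its baseline: posting costs $p_w$, while not posting costs either $p_w+\epsilon$ (if $k$ clues appear) or the full stake $p_s>p_w$ (if they do not). Therefore a node that plays $\reply$ secures its baseline payoff---the reply is free and, by the previous step, no contract stage ensues---whereas a lone holdout is pushed at least $p_w$ below baseline. Because $U$ is increasing, sustaining $\neg\reply$ for such a node requires a bribe of at least $p_w$, independently of whether the node is risk-neutral or risk-averse. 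Summing over the $N-f$ nodes that must all withhold, the adversary needs more than $(N-f)p_w$ coins, contradicting $p_0 < (N-f)p_w$. Hence unanimous withholding is impossible, at least one node replies, and the conclusion follows, with the residual negligible terms accounting for cryptographic failures.

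The step I expect to be the main obstacle is upgrading this deviation argument from the unanimous pure profile to \emph{all} Nash equilibria, in particular mixed ones in which nodes randomize over $\reply$ and the client queries with positive probability. The difficulty is that a randomizing node's \emph{expected} loss from withholding is only $\alpha_i\,p_w$, where $\alpha_i$ is its probability of being pivotal (all others withholding), rather than $p_w$ outright; a naive summation then bounds only the pivotal probabilities and does not by itself force the all-withhold probability to be negligible. I would close this either by invoking the commitment structure of the bribe mechanism---a node that accepts a withholding offer is bound to $\neg\reply$, so the adversary must secure unanimous \emph{acceptance}, which restores the $p_w$-per-node accounting---or by sharpening the holdout penalty on the failure branch from $p_w$ to the full stake $p_s$ and combining it with the contract-stage bound of Theorem~\ref{thm:security-two}. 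Checking that one of these routes fully closes the gap, uniformly in the utility exponent $\nu$, is where the genuine care is required.
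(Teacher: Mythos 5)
Your proposal takes the paper's own route: the paper also first pins the client to ``play $\query$ if and only if no clue arrived over the network,'' using $p_1 < p_{\mathrm{comp}}-p_c$ for one direction and the failure of formula~\eqref{eq:p1-constraint} for the other, and then runs your budget argument in pigeonhole form --- since $p_0 < (N-f)p_w$, some node $\node$ receives a bribe below $p_w$; once the contract is queried that node ends strictly below its baseline $(C)^\nu$ whether it posts or not, whereas deviating to $\reply$ suppresses the contract stage and returns $(C)^\nu$. For the pure, unanimous-withholding profile your summation and the paper's pigeonhole are the same accounting, and both are sound there.

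The genuine problem is exactly the case you flag and leave open, and neither of your repairs closes it. Repair (a) fails because, in this paper's bribery model, acceptance binds a node to an arbitrary \emph{randomized, contingent} plan and the bribe is paid for compliance with that plan, not for the withholding action as such; this is precisely how the adversary of Theorem~\ref{thm:security-two} operates. Consider the offer, made to all $N-f$ nodes with per-node bribe $p_b = p_0/(N-f)$: ``with probability $q$ withhold from both the client and the contract; with probability $1-q$ reply to the client.'' If all others accept, rejecting and always replying yields $(C)^\nu$, while accepting yields (risk-neutral case) expected payoff $C + p_b - q\,p_s$, since the slashing loss is suffered only on the probability-$q$ branch, and on the other branch the node pockets $p_b$ at no cost while the others' clues keep the contract idle. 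Acceptance is therefore individually rational as soon as $p_b \geq q\,p_s$, so the budget $p_0 < (N-f)p_w$ caps the sustainable all-withhold probability only at about $p_0/((N-f)p_s)$ --- a non-negligible constant, not negligible in $\lambda$; the $p_w$-per-node accounting is not restored. Repair (b) fails identically, because the stake $p_s$ still enters multiplied by the same pivotality probability. You should also be aware that the paper's written proof does not resolve this regime either: its final step compares the deviation payoff $(C)^\nu$ with $\node$'s utility \emph{conditional on the contract being queried}, a comparison that yields a contradiction only when that conditioning event has probability essentially one, i.e., only for the near-pure profiles your argument already covers; in the mixed regime the node's unconditional equilibrium utility includes the bribe collected on the branch where another node's clue suppresses the contract, and can weakly exceed $(C)^\nu$. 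So the obstacle you named is the actual crux of the theorem; closing it requires either restricting the adversary to bribes payable only when the node in fact withholds (a reading under which the lower-bound adversary of Theorem~\ref{thm:security-two} would no longer work), or an argument genuinely different from both yours and the paper's.
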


When $\nu=1$, \ie, for risk-neutral nodes, formula~\eqref{eq:p1-constraint} implies $p_1 \geq p_c$.
As $p_c$ can be as small as the gas cost of sending a query, for most $(p_0,p_1)$-adversaries, we expect $p_1$ to exceed $p_c$, \ie to satisfy formula~\eqref{eq:p1-constraint}.
\subsection{Repeated Games}
\label{sec:analysis-repeated-games}

Although the adversary can offer any bribe and specify any action in return, exchange of the bribe and the execution of the action do not necessarily happen atomically.
This might discourage cooperation between the nodes and the adversary as they can renege on their promises.

One way the parties can ensure atomicity is through a \emph{trusted third party}.
It can take the custody of the nodes' internal states, along with the adversary's bribe, and adjust the payoffs after the game.
Alternatively, the adversary and nodes can sustain a cooperative equilibrium over the repeated instances of the single-stage query-response game analyzed in Section~\ref{sec:analysis-risk-neutral}, with the help of a common random coin.
For repeated games to be feasible, we assume that the nodes have more coins than the collateral $p_0$, enabling them to absorb occasional losses due to slashing in return for long-term profit.
Similarly, we assume that the adversary can continue to offer bribes each new game.
Let $\delta$ denote the discount rate.
We consider the optimal contract of Section~\ref{sec:contract} in the following analysis.

Suppose $p_0 < (N-f-k+1)(p_s-p_w)$, and there is a query at the contract at slot $t$.
By Section~\ref{sec:analysis-risk-neutral}, for any $p_0$ and $\nu \in (0,1]$, there exists a $p_0$-adversary and Nash equilibrium, where less than $k$ nodes send their clues to the contract at slot $t+1$ with probability at least $q^*_{p_0,\nu}$.
The utilities of the nodes in the equilibrium are feasible and strictly individually rational as they are at least as large as the maximum utility, $(C-p_w)^\nu$, any node can get without cooperating with the adversary.
Thus, by the Nash folk theorem, we can state the following:
\vspace{-0.3cm}
\begin{theorem}
\label{thm:folk-theorem}
There exists a discount rate $\delta^*<1$ such that for all $\delta>\delta^*$, there is a subgame perfect equilibrium of the repeated game with the same expected utilities for the nodes per game as the single-stage game.
Moreover, at each game, less than $k$ clues are posted to the contract with probability at least $q^*_{p_0,\nu}$.
\end{theorem}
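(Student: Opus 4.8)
The plan is to recognize Theorem~\ref{thm:folk-theorem} as a direct application of the Nash-threat (Friedman) folk theorem to the infinitely repeated version of the single-stage query–response subgame of Section~\ref{sec:analysis-risk-neutral}, so that most of the work is verifying that the folk theorem's hypotheses hold for the payoff profile we wish to sustain. First I would fix the stage game to be the subgame in which a query sits at the contract at slot $t$ and the $N-f$ non-adversarial nodes simultaneously choose $\place$ or $\neg\place$ at slot $t+1$, with each node's payoff shifted by whatever bribe $\mathcal{A}$ offers. The target payoff profile is the expected per-game utility vector of the adversarial Nash equilibrium furnished by Section~\ref{sec:analysis-risk-neutral} (Theorem~\ref{thm:security-two} and its risk-averse counterpart), in which the $N-f-k+1$ bribed nodes collectively withhold with probability at least $q^*_{p_0,\nu}$. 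The paragraph preceding the theorem has already recorded the two ingredients the folk theorem consumes: this profile is feasible, since it is literally realized by a stage-game equilibrium and the common random coin supplies the public randomization device that implements the correlated mixing; and it is strictly individually rational, since each cooperating node's expected utility is at least the benchmark $(C-p_w)^\nu$, the best utility a node can secure unilaterally while ignoring $\mathcal{A}$.

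With these in hand I would construct Nash-reversion (grim-trigger) strategies: on the equilibrium path, in every period the bribed nodes read the common coin and collectively withhold with probability $q^*_{p_0,\nu}$ while $\mathcal{A}$ pays the promised bribe; upon the first detectable deviation — a node posting a clue when the coin called for withholding, or $\mathcal{A}$ failing to deliver a promised bribe — all nodes revert \emph{permanently} to the honest stage-game equilibrium in which every node plays $\place$ and earns exactly $(C-p_w)^\nu$, and $\mathcal{A}$ ceases to bribe. Detectability is supplied by the monitoring assumption of Section~\ref{sec:model}: the contract records which nodes posted and bribe delivery is observable through the environment, so any unilateral deviation becomes common knowledge at the start of the next slot. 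Because the reversion target is itself a stage-game Nash equilibrium, the punishment phase is credible in every continuation, which is what makes the trigger strategies subgame perfect rather than merely Nash. A standard one-shot-deviation check then weighs the one-period gain from reneging against the discounted loss of the entire future stream of bribe premia; since the premium is a fixed positive constant and the per-period deviation gain is bounded, there is a threshold $\delta^*<1$ above which no party profits from deviating. The ``moreover'' clause is then immediate, since each equilibrium-path period reproduces exactly the single-stage adversarial equilibrium: fewer than $k$ clues appear with probability at least $q^*_{p_0,\nu}$ in every game, and the per-game expected utilities coincide with the single-stage values by construction.

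The main obstacle I anticipate is not the mechanical $\delta^*$ computation but pinning down strict individual rationality and, relatedly, ensuring the punishment genuinely bites on both sides of the non-atomic bribe. Reversion to the honest equilibrium deters a node from reneging only if its cooperative utility \emph{strictly} exceeds $(C-p_w)^\nu$, so I would need $\mathcal{A}$ to set the bribe to leave a strictly positive surplus above this benchmark rather than leaving the node exactly indifferent; symmetrically, the adversary's willingness to keep paying after a node has already withheld must be sustained by the node's credible threat to stop withholding in future games. This bilateral enforcement is exactly what the repeated interaction buys over the single shot, and it is the substantive content behind the otherwise routine folk-theorem citation. A secondary point worth checking is that the Nash-threat version suffices here, which lets me avoid any full-dimensionality condition on the feasible payoff set and rely instead only on strict Pareto-domination of the honest stage-game equilibrium together with the public coin.
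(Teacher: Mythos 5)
Your proposal takes essentially the same route as the paper's own proof, which likewise invokes the Nash-threat (Friedman) folk theorem---citing feasibility and individual rationality of the single-stage equilibrium utilities relative to the benchmark $(C-p_w)^\nu$---and sustains cooperation with exactly the grim-trigger-plus-common-coin construction you describe (the adversary cuts off any node that posts a clue when told to withhold, and the nodes stop cooperating if the adversary fails to offer the bribe). The obstacle you flag in your last paragraph is genuine and, if anything, handled more carefully by you than by the paper: the paper asserts the utilities are ``strictly individually rational as they are at least as large as'' $(C-p_w)^\nu$, whereas strictness (which the folk theorem actually needs, since the single-stage adversary leaves bribed nodes exactly indifferent at their minimax value) indeed requires the strictly positive surplus you propose to build into the bribe.
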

By the proof of Theorem~\ref{thm:security-two-risk-averse}, no adversary can guarantee less than $k$ clues to be posted to the contract with probability larger than $q^*_{p_0,\nu}$ without making the utility of a node less than $(C-p_w)^\nu$, its minimax utility. Consequently, $q^*_{p_0,\nu}$ is the maximum failure probability that can be sustained through repeated games.

To maintain the aforementioned equilibrium, the adversary and nodes can use a grim trigger strategy:
Consider the adversary of Theorem~\ref{thm:security-two}. 
The common random coin is flipped before each game, and obtains the value $0$ with probability $q^*_{p_0,\nu}$ and the value $1$ with probability $1-q^*_{p_0,\nu}$.
Before each game, the adversary offers its bribe.
Then, the coin is flipped.
If the outcome is $0$, 
the bribed nodes are asked to withhold their clues from the contract.
At this point, if any of the bribed nodes sends its clue to the contract, the adversary stops offering any future payoff to that node.
Similarly, if the adversary fails to offer a sufficient bribe before the coin is flipped, the nodes stop cooperating with the adversary.

\section{Evaluation}
\label{sec:evaluation}
\begin{figure}
    \centering
    \includegraphics[width=0.9\linewidth]{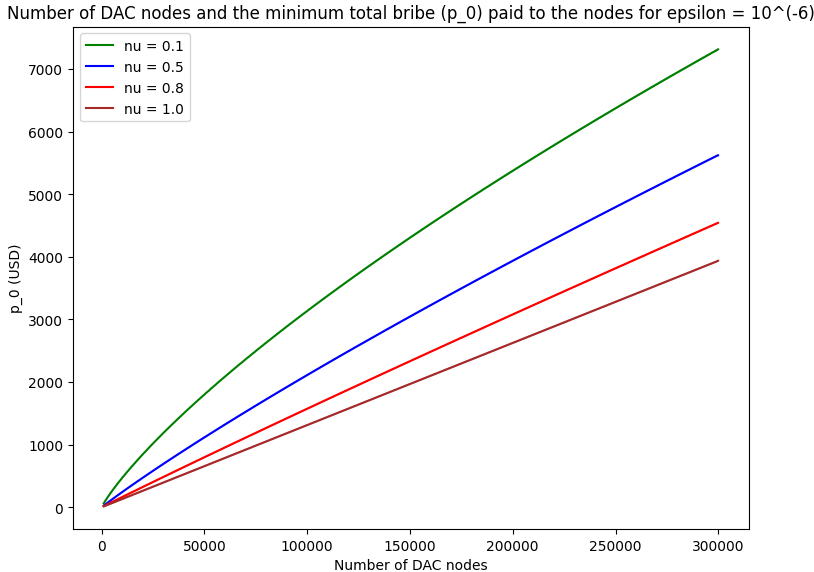}
    \caption{Lower bounds on the total bribe, $p_0$, needed to ensure that the probability the client does not obtain the answer to its query is $\epsilon = 10^{-6}$, as a function of the number of DAC nodes $N$, and the utility functions $U(x)=x^\nu$, $\nu = 0.1, 0.5, 0.8, 1.0$.}
    \label{fig:bribe-vs-n}
\end{figure}

We next calculate the bribe $p_0$ needed to violate security in the equilibrium with the largest failure probability, when a query is sent to the optimal contract of Section~\ref{sec:contract} on Ethereum.
When the clues are SNARK proofs as argued in Section~\ref{sec:model},
assuming that sending and verifying a SNARK proof on Ethereum requires $650000$ gas~\cite{zksnark-verification-cost}, and the gas cost is $34.77$ Gwei\footnote{The gas cost is the average gas price for July 15, 2022~\cite{ycharts}.}, 
we estimate the cost of posting a clue to the contract as $p_w \approx 0.0226$ ETH. 
We set the collateral $p_s$ to be $32$ ETH to match the minimum amount that can be staked in Ethereum by an independent node.
Assuming that the adversary can control up to $1/3$ of the $N$ DAC nodes, and clues from $1/3$ of the nodes are sufficient to recover the answer to the client queries, we set $N-f-k+1$ to be $N/3$.
The $1/3$ bound for the adversarial DAC nodes matches the maximum tolerable adversary fraction shown for the security of Casper FFG~\cite{casper}, the finality gadget of PoS Ethereum.
We consider $N<300,000$, which has the same magnitude as the number of validators on PoS Ethereum~\cite{val-num}.

Let $\epsilon$ denote the maximum failure probability for DAC security that the clients are willing to tolerate.
Suppose $\epsilon = 0.1\%$.
For risk-neutral nodes, Theorem~\ref{thm:security-two} implies that $\epsilon = \min(1,\frac{1}{N-f-k+1}\frac{p_0}{p_s-p_w})$.
For risk-averse nodes with the utility function $U(x)=x^\nu$, $\epsilon$ is the solution to the optimization problem in Theorem~\ref{thm:optimization-problem}, which is upper and lower by Theorem~\ref{thm:p0-bound} in Appendix~\ref{sec:appendix-risk-averse-bounds}.
Using the parameters identified above, the formula for $\epsilon$ for risk-neutral nodes and the bounds for risk-averse nodes, we calculate the following bounds\footnote{Ethereum to USD conversion rate, $1 \text{ ETH } \approx 1231.0 \text{ USD}$, is the average Ethereum price on July 15, 2022~\cite{conversion-rate}.} 
for the minimum bribe $p_0$ needed to violate security with probability $\epsilon=10^{-3}$ ($0.1\%$), as a function of the utility parameter $\nu$ (Details are presented in Appendix~\ref{sec:appendix-risk-averse-bounds}).
\begin{table}[h]
\begin{center}
\begin{tabular}{||c||c|c||}
 \hline
 $\nu$ & Lower bound on $p_0$ & Upper bound on $p_0$ \\
 \hhline{||=||=|=||}
 \ \ $0.1$ \ \ & \ \ $3197.9$ ETH ($3.9$ Million USD) \ \ &\ \  $13257.7$ ETH ($16.3$ Million USD) \ \ \\
 \hline
 $0.5$ & $3197.9$ ETH ($3.9$ Million USD) & $6082.5$ ETH ($7.5$ Million USD) \\
 \hline
 $0.8$ & $3197.9$ ETH ($3.9$ Million USD) & $3977.5$ ETH ($4.9$ Million USD) \\
 \hline
 $1.0$ & $3197.9$ ETH ($3.9$ Million USD) & $3197.9$ ETH ($3.9$ Million USD) \\
 \hline
\end{tabular}
\end{center}
\caption{Lower and upper bounds on 
$p_0$ in ETH and USD as a function of the utility parameter $\nu$, where $1 \text{ ETH } \approx 1231.0 \text{ USD}$, the number of DAC nodes $N$ is $300,000$, and the failure probability is $\epsilon = 0.1\%$.
}
\vspace{-0.5cm}
\label{tab:bribe-bounds}
\end{table}

The exact value of $p_0$ increases as $\nu$ decays, \ie, as the nodes become more risk averse.
This increase becomes more stark at small values of the maximum failure probability $\epsilon$.
To illustrate this point, we plot the lower bound on $p_0$ as a function of $N \in [1,300000]$, for $\nu = 0.1, 0.5, 0.8, 1.0$ and $\epsilon = 10^{-6}$ (as opposed to $\epsilon = 0.1\%$) on Figure~\ref{fig:bribe-vs-n}, where the lower bound curve increases as $\nu$ becomes smaller.
Since Table~\ref{tab:bribe-bounds} considers $\epsilon = 10^{-3}$, unlike the case with $\epsilon = 10^{-6}$, the lower bound expression for $p_0$ does not increase as $\nu$ becomes smaller.

\section{Discussion and Future Work}
\label{sec:discussion}
\paragraph{\bf Preventing centralization of storage.}
DAC members have an incentive to pool their resources
and pay for a central data repository, \eg, a cloud provider.
They then answer the client queries by querying the central repository,
and split the cost of the repository among themselves.
However, if this single repository loses the data, 
then all is lost.
Thus, a DAC protocol should discourage data centralization,
and this can be done using a cryptographic
Proofs of Replication (POR)~\cite{POR} that forces every node
to store a different incompressible version of the data. 
However, POR introduces a significant computation overhead.
Interestingly, the data centralization problem is not addressed
by data availability or storage systems such as Celestia~\cite{lazyledger} and Arweave~\cite{arweave}.

While our protocol does not solve the problem, 
arguably, it discourages data centralization.
A node that participates in a centralization scheme 
is putting its trust in the repository to preserve the data.
However, the repository has little to lose if the data is lost, 
while the node will lose its entire stake.
Hence, the node is incentivized to store the data locally
rather than to trust a third party.

\paragraph{\bf Preventing client DoS attack.}
Clients can send queries to the contract frequently, at the cost of $p_c$ coins per query.
Although $p_c$ can be as low as the gas cost of posting an account information on chain (\cf Application in Section~\ref{sec:model}), which implies a potential DoS vector, the contract can increase this cost to disincentivize DoS attacks.
The value of $p_c$ can even be adaptively chosen as a function of the number of queries to reduce congestion. 
Then, as long as $p_c$ is not subsidized by the bribe $p_1$, no rational client would send a query unless the nodes withhold their clues.
However, $p_c$ should not be too high as that would hurt the contract balance by requiring a high $p_{\mathrm{comp}}$ (\cf Remark~\ref{rem:small-p1}), and discourage rational clients from sending queries for accounts with smaller balances (\ie, $p_f$).
An interesting future work is to determine the optimal $p_c$ that would not impose a high burden on most accounts while making spamming attacks costly.

\paragraph{\bf Utility functions.}
The analysis in Section~\ref{sec:evaluation} demonstrates how risk-aversion implies a higher bribe for the adversary to violate security.
However, the exact shape of the utility function depends on the marginal utility for the coin in which the payoffs are provided.
Quantifying this marginal utility and identifying the correct function is important future work to accurately assess the affect of bribery on security.

\paragraph{\bf Bribery.}
Besides external offers, bribery quantifies the nodes' incentives to withhold data, \eg, to prevent a competitor from learning about a transaction.
In general, bribery captures all incentives (\eg, MEV) that might cause the rational nodes to deviate from the prescribed consensus and data availability protocols.

\paragraph{\bf Collusion.} 
Collusions among nodes can help mitigate the free-rider problem in Section~\ref{sec:analysis-risk-neutral}.
When the nodes are non-colluding and $p_w = 0$, there exists an equilibrium with non-negligible probability of failure, where the nodes trust each other to send sufficiently many clues to the contract.
By colluding, rational nodes can post exactly $k$ clues after each query, thus obviating the punishment $p_w$ for non-responsive nodes when security is satisfied.
This minimizes the number of clues posted to the contract.
Analyzing collusions among subsets of nodes can also shed light on games, where multiple nodes are controlled by the same entity.

\iftoggle{fullversion}{
\paragraph{\bf Acknowledgments.}
This work was supported by NSF, ONR, the Simons Foundation, NTT Research, and a grant from Ripple.
Additional support was provided by the Stanford Center for Blockchain Research.
}{}

\bibliographystyle{splncs04}
\bibliography{references}

\appendix
\section{Proofs of Lemma~\ref{lem:security-zero}, and Theorems~\ref{thm:axioms-converse} and~\ref{thm:security-two} for risk-neutral nodes}
\label{sec:appendix-proofs-risk-neutral}

\begin{proof}[Proof of Lemma~\ref{lem:security-zero} for risk-neutral nodes and clients]
Consider a game played among the client and the DAC nodes.
At the beginning of slot $2$ the client $\client$ either received $k$ or more valid clues over the network from the nodes, or it did not.
\begin{itemize}[nosep]
\item 
Let $Q_{\geq k}$ denote the event that $\client$ takes the action $\query$, \ie, sends query to the contract, by slot $2$ even if it receives $k$ or more valid clues over the network by the end of slot $3$.

\item 
Let $Q_{< k}$ denote the event that $\client$ takes the action $\query$ by slot $2$ if it does not receive $k$ or more valid clues over the network by the end of slot $3$.

\item Let $R$ denote the event that $k$ or more nodes take the action $\reply$, \ie send clues to the client over the network, by slot $3$.

\item Let $P$ denote the event $k$ or more nodes take the action $\place$, \ie, post clues to the contract, at slot $t+1 \leq 3$ if a query is received by the contract by the end of some slot $t \leq 2$.
\end{itemize}

\medskip\noindent
Given the events above, we can summarize the $\client$'s payoff by the following table:
\begin{table}[h]
\begin{center}
\begin{tabular}{x{2cm}|x{3cm}|x{3cm}|x{2cm}|x{2cm}} 
 & $Q_{\geq k} \land Q_{< k}$ & $Q_{\geq k} \land \overline{Q}_{< k}$ & $\overline{Q}_{\geq k} \land Q_{< k}$ & $\overline{Q}_{\geq k} \land \overline{Q}_{< k}$ \\
 \hline
 \hline
 $R \land P$ & $p_f-p_c$ & $p_f-p_c$ & $p_f$ & $p_f$ \\
 \hline
 $\overline{R} \land P$ & $p_f-p_c$ & $0$ & $p_f-p_c$ & $0$ \\
 \hline
 $R \land \overline{P}$ & $p_f+p_{\mathrm{comp}}-p_c$ & $p_f+p_{\mathrm{comp}}-p_c$ & $p_f$ & $p_f$ \\
 \hline
 $\overline{R} \land \overline{P}$ & $p_{\mathrm{comp}}-p_c$ & $0$ & $p_{\mathrm{comp}}-p_c$ & $0$
\end{tabular}
\end{center}
\caption{Payoff of a risk-neutral client}
\end{table}

Towards contradiction, suppose there exists a Nash equilibrium, where the probability $\Pr[\overline{R} \land \overline{P}]$ is non-zero.
If $\Pr[\overline{R} \land \overline{P}]>0$, then the client never takes the actions $Q_{\geq k} \land \overline{Q}_{< k}$ and $\overline{Q}_{\geq k} \land \overline{Q}_{< k}$ with positive probability in the equilibrium, as it can increase its expected payoff by reducing the probability of $Q_{\geq k} \land \overline{Q}_{< k}$ in favor of $Q_{\geq k} \land Q_{< k}$, and by reducing the probability of $\overline{Q}_{\geq k} \land \overline{Q}_{< k}$ in favor of $\overline{Q}_{\geq k} \land Q_{< k}$.
Hence, in the equilibrium, either $\client$ receives $k$ or more valid clues over the network by the end of slot $3$, \ie, the event $R$ happens, or $\client$ sends a query to the contract by slot $2$.
In the latter case, $\client$'s query is received by the contract by the end of slot $2$.

If a valid query appears in the contract by the end of some slot $t$, and the node $\node_i$, $i \in [N-f]$, sends its clue to the contract at slot $t+1$, its payoff becomes $-p_w$.
On the other hand, if a valid query is received by the contract by the end of some slot $t$, and $\node_i$ does not send its clue to the contract at slot $t+1$, its payoff can at most be $-p_w-\epsilon$.
Since $-p_w > -p_w-\epsilon$, in the equilibrium, if $\client$'s query is received by the contract by the end of slot $2$, $\node_i$ sends its clue to the contract by slot $3$, which is then observed by the client by the beginning of slot $4$.
As this holds for all nodes $\node_i$, $i \in [N-f]$, if a query is received by the contract by the end of slot $2$, all non-adversarial nodes send their clues to the contract by slot $3$, \ie, the event $P$ happens.
However, this implies $\Pr[R \lor P]=1$, and
$\Pr[\overline{R} \land \overline{P}]=0$, which is a contradiction.
Consequently, $4$-security is satisfied with overwhelming probability in all Nash equilibria.
\end{proof}

\begin{proof}[Proof of Theorem~\ref{thm:axioms-converse} for risk-neutral nodes]
Suppose a query is received by the contract at some slot $t \in \mathbb{N}$.
Consider the adversary $\mathcal{A}$ that makes every adversarial node take the action $\neg\place$ (not post clues to the contract) at all slots.
Suppose there exists a Nash equilibrium of this subgame, where each node $\node_i$, $i \in [N-f]$, independently decides to take the action $\place$ at slot $t+1$, with probability $r^* \in [0,1)$, and to take the action $\neg\place$ at slot $t+1$, with probability $1-r^*$.
Then, in the equilibrium, the expected payoff of each node $\node_i$, $i \in [N-f]$, becomes 
\begin{IEEEeqnarray*}{C}
r^* (-p_w + \mathbb{E}[f_i(\mathbf{X})|X_i = 1]) + (1-r^*)\mathbb{E}[f_i(\mathbf{X})|X_i = 0]
\end{IEEEeqnarray*}
Here, $\mathbb{E}[f_i(\mathbf{X})|X_i = 1]$ and $\mathbb{E}[f_i(\mathbf{X})|X_i = 0]$ denote the expected payoff of $\node_i$ given all other nodes' actions and conditioned on the fact that $\node_i$ takes the action $\place$ and $\neg\place$ at slot $t+1$ respectively.
As the slashing function is symmetric and offers no rewards, there exist $e_0,e_1 \in [-p_s,0]$ such that $\mathbb{E}[f_i(\mathbf{X})|X_i = 1]=\mathbb{E}[f_j(\mathbf{X})|X_j = 1]=e_1$, and $\mathbb{E}[f_i(\mathbf{X})|X_i = 0]=\mathbb{E}[f_j(\mathbf{X})|X_j = 0]=e_0$ for all $i,j \in [N-f]$. 

For the action profile described above to be a Nash equilibrium, it must be the case that 
\begin{IEEEeqnarray*}{C}
r^* (-p_w + e_1) + (1-r^*)e_0 \geq r (-p_w + e_1) + (1-r) e_0
\end{IEEEeqnarray*} 
for all $r \in [0,1]$.
For each $i \in [N-f]$, the inequality above is satisfied by
\begin{itemize}
    \item $r^*=1$, if $-p_w + e_1 > e_0$,
    \item $r^*=0$, if $-p_w + e_1 < e_0$,
    \item any $r^* \in [0,1]$, if $-p_w + e_1 = e_0$.
\end{itemize}

Let $r^*_{-i} \in [0,1]$ denote the probability that among the $N-f-1$ non-adversarial nodes other than $\node_i$, at most $k-1$ nodes take the action $\place$ (send clues to the contract) at slot $t+1$ in the equilibrium.
As the slashing function is symmetric, $r^*_{-i}=r^*_{-j}=r^*_{-1}$ for all $i,j \in [N-f]$, and
\begin{IEEEeqnarray*}{rCl}
- (1-r^*_{-1})B - r^*_{-1}p_s &\leq& e_0 \leq 0 \\
-p_w-p_s &\leq& -p_w + e_1 \leq -p_w,
\end{IEEEeqnarray*}
where $B < p_w$.
This implies a value $r^*_{-1} \in (0,1]$ such that $-p_w+e_1 \leq e_0$, and $r^*<1$.

If $-p_w+e_1 < e_0$, then $r^*=0$.
On the other hand, if $-p_w+e_1 = e_0$, as $B<p_w$, it must be the case that $r^*_{-1} \in (0,1]$, which implies $r^* \in [0,1)$.
Consequently, if $B < p_w$, there indeed exists a $(0,0)$-adversary $\mathcal{A}$ and a Nash equilibrium, where each non-adversarial node independently decides to not post its clue to the contract at the slot after a query is received by the contract, with probability $r^*>0$.

Finally, suppose there exists a subgame perfect equilibrium of the dynamic game, where none of the nodes send a valid clue to the client $\client$ over the network by slot $3$, and $\client$ sends its query to the contract by slot $2$.
Once the query appears in the contract, each nodes chooses to withhold its clue from the contract in the next slot with probability $r^*$.

As shown above, if there is a query in the contract, none of the nodes can increase its expected payoff by deviating from the specified action given the other nodes' actions.
Similarly, $\client$ cannot increase its expected payoff by not sending a query to the contract, when none of the nodes sends a valid clue over the network by slot $3$.
Finally, none of the nodes can increase its expected payoff by sending a valid clue to $\client$ over the network by slot $2$, since this does not affect $\client$'s behavior.
This is because $k>1$ and all other nodes refuse to send their clues over the network.
Hence, the claimed action profile indeed constitutes a subgame perfect equilibrium.
However, in this case, there are less than $k$ valid clues in the contract by slot $4$ with probability at least $(1-r^*)^{N-f}>0$.
Thus, there exists a $(0,0)$-adversary and a subgame perfect equilibrium, where $4$-security is violated with positive probability.
\end{proof}

\begin{proof}[Proof of Theorem~\ref{thm:security-two}]
We first show that given the slashing function of Section~\ref{sec:contract}, $q^{\mathcal{A}}_{\mathrm{v}} \leq q^*$ for any $p_0$-adversary $\mathcal{A}$.
Let $F$ denote the event that there are less than $k$ valid clues in the contract at slot $t+1$.
Towards contradiction, suppose there exists a Nash equilibrium such that $\Pr[F] > q^*$ in the equilibrium.
Then, with probability greater than $q^*$, at least $N-f-k+1$ non-adversarial nodes take the action $\neg \place$, \ie, do not post their clues to the contract, at slot $t+1$ in the equilibrium.
Let $U^*_i$ denote the realization of the node $\node_i$'s payoff, $i \in [N-f]$, in the equilibrium. 
In the case of event $F$, at least $N-f-k+1$ nodes incur a penalty of $-p_s$, and given the total bribe $p_0$, the total expected payoff of the non-adversarial nodes can at most be 
\begin{IEEEeqnarray*}{C} 
\sum_{i=1}^{N-f}\mathbb{E}[U^*_i|F] \leq p_0-(k-1)p_w-(N-f-k+1)p_s
\end{IEEEeqnarray*}
in the equilibrium.
Conversely, if the event $F$ does not happen, the total expected payoff can at most be
\begin{IEEEeqnarray*}{C} 
\sum_{i=1}^{N-f} \mathbb{E}[U^*_i|\overline{F}] \leq p_0-(N-f)p_w
\end{IEEEeqnarray*}
in the equilibrium.
Using the two inequalities and the assumed lower bound on $\Pr[F]$, we derive the following upper bound on the total expected payoff of the non-adversarial nodes in the equilibrium:
\begin{IEEEeqnarray*}{rCl} 
\sum_{i=1}^{N-f} \mathbb{E}[U^*_i] &\leq& p_0-(N-f)p_w-\Pr[F](N-f-k+1)(p_s-p_w) \\
&<& p_0-(N-f)p_w-q^*(N-f-k+1)(p_s-p_w) \\
&=& -(N-f)p_w  
\end{IEEEeqnarray*}
The inequality above implies the existence of at least one node $\node_{i^*}$, $i^* \in [N-f]$, such that $\mathbb{E}[U^*_{i^*}] < -p_w$.
However, sending a valid clue to the contract at slot $t+1$ gives a payoff of $-p_w$ for $\node_{i^*}$ regardless of the actions of all other nodes, implying that there exists an action that strictly dominates the one taken by $\node_{i^*}$ in the equilibrium.
Thus, we have reached a contradiction.

\begin{center}
    ***
\end{center}

We next construct a $p_0$-adversary $\mathcal{A}$ such that for any compliant slashing function, there exists a Nash equilibrium, where $q^{\mathcal{A}}_{\mathrm{v}} \geq q^*$.
Let $X_i$ denote the indicator random variable for the event that $\node_i$ takes the action $\place$, \ie sends its query to the contract, at slot $t+1$.
Then, any adversary $\mathcal{A}$ can be characterized as follows:
\begin{itemize}
    \item Bribes offered to the non-adversarial nodes $\node_i$, $i \in [N-f]$: $p^i_b$, $i \in [N-f]$.
    Bribes must satisfy the equation $\sum_{i \in [N-f]} p^i_b \leq p_0$.
    A non-adversarial node is called \emph{corrupt} if it accepts the bribe.
    \item The probability distribution over the actions adopted by the adversarial and corrupt nodes.
    For each $Q$, $\{N-f+1,\ldots,N\} \subseteq Q \subseteq [N]$, representing the set of adversarial and corrupt nodes, $\mathcal{A}$ proposes the following action profile:
    $\Pr[X_i=x_i \in \{0,1\}, i \in Q] = q_{(x_i, i \in Q)}$.
\end{itemize}
Similarly, given any adversary $\mathcal{A}$, each Nash equilibrium can be described by the following variables: $\{\tilde{r}^*_i\}_{i \in [N-f]}$ and $\{r^*_i\}_{i \in [N-f]}$.
Here, $\tilde{r}^*_i$ denotes the probability that $\node_i$ accepts the bribe in the equilibrium, whereas $r^*_i$ denotes the probability that $\node_i$ takes the action $\place$ in the event that it does not accept the bribe.
We allow $r^*_i$ to be undefined (shown as $-$) if $\tilde{r}^*_i=1$, \ie, if $\node_i$ accepts the adversary's bribe, in which case it will take the action $\place$ as dictated by the adversary.
In the equilibrium, each node $\node_j$, $j \in [N-f]$, chooses the values $\tilde{r}^*_j$ and $r^*_j$ to maximize its expected payoff given the adversary $\mathcal{A}$, and the other nodes' equilibrium actions, \ie, $\{\tilde{r}^*_i\}_{i \in [N-f]/\{j\}}$ and $\{r^*_i\}_{i \in [N-f]/\{j\}}$.

Consider the $p_0$-adversary $\mathcal{A}$ that offers a bribe of $p^i_b = p_0/(N-f-k+1)$ to the nodes $\node_i$, $i \in [N-f-k+1]$, and for the set $Q \subseteq [N-f-k+1] \cup \{N-f+1,\ldots,N\}$, specifies
\begin{IEEEeqnarray*}{rCl} 
\Pr[X_i = 0, i \in Q] &=& \frac{p_0}{(N-f-k+1)(p_s-p_w)}, \\
\Pr[X_i=1, i \in Q] &=& 1-\frac{p_0}{(N-f-k+1)(p_s-p_w)}.
\end{IEEEeqnarray*}
We will show that given $\mathcal{A}$, for any compliant slashing function, there exists a Nash equilibrium such that $q^{\mathcal{A}}_{\mathrm{v}} \geq q^*$.
In the equilibrium, for each $i \in [N-f-k+1]$, either $(\tilde{r}^*_i,r^*_i) = (1,-)$ or $(\tilde{r}^*_i,r^*_i) = (0,0)$, and for each $i \in [N-f]/[N-f-k+1]$, $(\tilde{r}^*_i,r^*_i) = (0,0)$. 
In other words, nodes that are offered bribes either accept the bribe and become corrupted, or do not post their clues to the contract in the equilibrium, whereas those that are not offered bribes post their clues to the contract.

Since the considered slashing functions are compliant, they offer no rewards and satisfy $B$ minimal punishment for some $B \geq p_w$ by Theorem~\ref{thm:axioms-converse}.
Thus, for all $i \in [N]$, it must be the case that $f_i(\mathbf{x}) \leq 0$ for all $\mathbf{x} \in \{0,1\}^N$, and $f_i(\mathbf{x}) \leq -p_w$ for all $\mathbf{x} \in \{0,1\}^N$ such that $x_i=0$.
Then, if a node $\node_i$, $i \in [N-f-k+1]$, rejects the bribe, its expected payoff becomes
\begin{IEEEeqnarray*}{rCl} 
(-p_w+\mathbb{E}[f_i(\mathbf{X})|X_i = 1])r^*_i + \mathbb{E}[f_i(\mathbf{X})|X_i = 0](1-r^*_i) \leq \mathbb{E}[f_i(\mathbf{X})|X_i = 1]r^*_i-p_w.
\end{IEEEeqnarray*}
Here, $\mathbb{E}[f_i(\mathbf{X})|X_i = 1]$ and $\mathbb{E}[f_i(\mathbf{X})|X_i = 0]$ denote the expected payoff of $\node_i$ given all other nodes' actions in the claimed equilibrium and conditioned on the fact that $\node_i$ takes the actions $\place$ and $\neg\place$ at slot $t+1$ respectively.
As the slashing function is symmetric, for any $i,j \in [N-f-k+1]$, $\mathbb{E}[f_i(\mathbf{X})|X_i = 1]=\mathbb{E}[f_j(\mathbf{X})|X_j = 1]=e_1 \leq 0$ and $\mathbb{E}[f_i(\mathbf{X})|X_i = 0]=\mathbb{E}[f_j(\mathbf{X})|X_j = 0]=e_0 \leq -B \leq -p_w$.

As each node $\node_i$, $i \in [N-f-k+1]$, maximizes its payoff given all other nodes' payoffs in the equilibrium, if $\node_i$ rejects the bribe, it must be the case that
\begin{itemize}
    \item $r^*_i=1$, if $-p_w+e_1 > e_0$. In this case, $\node_i$'s expected payoff becomes $-p_w+e_1 \leq -p_w$.
    \item $r^*_i=0$, if $-p_w+e_1 < e_0$.
    \item $r^*_i$ can be any value in $[0,1]$, if $-p_w+e_1 = e_0$.
    In this case, $\node_i$'s expected payoff becomes $e_0 \leq -B \leq -p_w$
\end{itemize}
On the other hand, if $\node_i$ accepts the bribe, its expected payoff becomes at least
\begin{IEEEeqnarray*}{rCl} 
&& \frac{p_0}{N-f-k+1} - p_s\frac{p_0}{(N-f-k+1)(p_s-p_w)} \\
&+& (-p_w + \mathbb{E}[f_i(\mathbf{X})|X_i = 1])\left(1-\frac{p_0}{(N-f-k+1)(p_s-p_w)}\right) \\
&=& -p_w + e_1\left(1-\frac{p_0}{(N-f-k+1)(p_s-p_w)}\right).
\end{IEEEeqnarray*}

Since $e_1 \leq 0$, if $-p_w+e_1 \geq e_0$, then the expected payoff of node $\node_i$ when it accepts the bribe is at least as large as its expected payoff when it rejects the bribe.
Hence, if $-p_w+e_1 \geq e_0$, for the nodes $\node_i$, $i \in [N-f-k+1]$, there does not exist any action that strictly dominates $(\tilde{r}^*_i,r^*_i)=(1,-)$.
In this case, the action profile specified by the adversary constitutes a Nash equilibrium, and $q^{\mathcal{A}}_{\mathrm{v}}$ becomes at least
\begin{IEEEeqnarray*}{C}
\Pr[X_i = 0, i \in Q] = \frac{p_0}{(N-f-k+1)(p_s-p_w)} = q^*,
\end{IEEEeqnarray*}
where $Q=[N-f-k+1]\cup\{N-f+1,\ldots,N\}$.
Conversely, if $-p_w+e_1 < e_0$, then either the nodes $\node_i$, $i \in [N-f-k+1]$, all reject the bribe, and set $r^*_i=0$, which implies $q^{\mathcal{A}}_{\mathrm{v}} = 1$, or they all accept the bribe, which implies $q^{\mathcal{A}}_{\mathrm{v}} \geq q^*$.
In both cases, $q^{\mathcal{A}}_{\mathrm{v}} \geq q^*$, thus concluding the proof.
\end{proof}
\section{Proofs of Lemma~\ref{lem:security-zero}, and Theorems~\ref{thm:axioms-converse}, and~\ref{thm:security-two-risk-averse} for risk-averse nodes}
\label{sec:appendix-proofs-risk-averse}

\begin{proof}[Proof of Lemma~\ref{lem:security-zero} for risk-averse nodes and clients]
Consider a game played among the client and the DAC nodes.
At the beginning of slot $2$, the client $\client$ either received $k$ or more valid clues over the network from the nodes, or it did not.
Recall the definitions of the events $Q_{\geq k}$, $Q_{< k}$, $R$ and $P$ from the proof of Lemma~\ref{lem:security-zero} for risk-neutral nodes in Appendix~\ref{sec:appendix-proofs-risk-neutral}.
Given these events, we can summarize the $\client$'s payoff by the following table:
\begin{table}[h]
\begin{center}
\begin{tabular}{x{2cm}|x{3cm}|x{3cm}|x{2cm}|x{2cm}} 
 & $Q_{\geq k} \land Q_{< k}$ & $Q_{\geq k} \land \overline{Q}_{< k}$ & $\overline{Q}_{\geq k} \land Q_{< k}$ & $\overline{Q}_{\geq k} \land \overline{Q}_{< k}$ \\
 \hline
 \hline
 $R \land P$ & $(p_f-p_c)^\nu$ & $(p_f-p_c)^\nu$ & $(p_f)^\nu$ & $(p_f)^\nu$ \\
 \hline
 $\overline{R} \land P$ & $(p_f-p_c)^\nu$ & $0$ & $(p_f-p_c)^\nu$ & $0$ \\
 \hline
 $R \land \overline{P}$ & $(p_f+p_{\mathrm{comp}}-p_c)^\nu$ & $(p_f+p_{\mathrm{comp}}-p_c)^\nu$ & $(p_f)^\nu$ & $(p_f)^\nu$ \\
 \hline
 $\overline{R} \land \overline{P}$ & $(p_{\mathrm{comp}}-p_c)^\nu$ & $0$ & $(p_{\mathrm{comp}}-p_c)^\nu$ & $0$
\end{tabular}
\end{center}
\caption{Utility of a risk-averse client}
\label{tab:client-utility}
\end{table}

Towards contradiction, suppose there exists a Nash equilibrium, where the probability $\Pr[\overline{R} \land \overline{P}]$ is non-zero.
If $\Pr[\overline{R} \land \overline{P}]>0$, then the client never takes the actions $Q_{\geq k} \land \overline{Q}_{< k}$ and $\overline{Q}_{\geq k} \land \overline{Q}_{< k}$ with positive probability in the equilibrium, as it can increase its expected payoff by reducing the probability of $Q_{\geq k} \land \overline{Q}_{< k}$ in favor of $Q_{\geq k} \land Q_{< k}$, and by reducing the probability of $\overline{Q}_{\geq k} \land \overline{Q}_{< k}$ in favor of $\overline{Q}_{\geq k} \land Q_{< k}$.
Hence, in the equilibrium, either $\client$ receives $k$ or more valid clues over the network by the end of slot $3$, \ie, the event $R$ happens, or $\client$ sends a query to the contract by slot $2$.
In the latter case, $\client$'s query is received by the contract by the end of slot $2$.

If a valid query appears in the contract by the end of some slot $t$, and the node $\node_i$, $i \in [N-f]$, sends its clue to the contract at slot $t+1$, its payoff becomes $(C-p_w)^\nu$.
On the other hand, if a valid query is received by the contract by the end of some slot $t$, and $\node_i$ does not send its clue to the contract at slot $t+1$, its payoff can at most be $(C-p_w-\epsilon)^\nu$.
Since $(C-p_w)^\nu > (C-p_w-\epsilon)^\nu$, in the equilibrium, if $\client$'s query is received by the contract by the end of slot $2$, $\node_i$ sends its clue to the contract by slot $3$, which is then observed by the client by the beginning of slot $4$.
As this holds for all nodes $\node_i$, $i \in [N-f]$, if a query is received by the contract by the end of slot $2$, all non-adversarial nodes send their clues to the contract by slot $3$, \ie, the event $P$ happens.
However, this implies $\Pr[R \lor P]=1$, and
$\Pr[\overline{R} \land \overline{P}]=0$, which is a contradiction.
Consequently, $4$-security is satisfied with overwhelming probability in all Nash equilibria.
\end{proof}

Theorem~\ref{thm:axioms} follows from Lemma~\ref{lem:security-zero}, which shows security under no attack (A3) for the slashing function of Section~\ref{sec:contract}.

\begin{proof}[Proof of Theorem~\ref{thm:axioms-converse} for risk-averse nodes]
Suppose a query is received by the contract at some slot $t \in \mathbb{N}$.
Consider the adversary $\mathcal{A}$ that makes every adversarial node take the action $\neg\place$ (not post clues to the contract) at all slots.
Suppose there exists a Nash equilibrium of this subgame, where each node $\node_i$, $i \in [N-f]$, independently decides to take the action $\place$ at slot $t+1$, with probability $r^* \in [0,1)$, and to take the action $\neg\place$ at slot $t+1$, with probability $1-r^*$.
Then, in the equilibrium, the expected payoff of each node $\node_i$, $i \in [N-f]$, becomes 
\begin{IEEEeqnarray*}{C}
r^* \mathbb{E}[(C-p_w+f_i(\mathbf{X}))^\nu|X_i = 1] + (1-r^*)\mathbb{E}[(C+f_i(\mathbf{X}))^\nu|X_i = 0],
\end{IEEEeqnarray*}
where $C=p_s+p_w$.
Here, $\mathbb{E}[(C-p_w+f_i(\mathbf{X}))^\nu|X_i = 1]$ and $\mathbb{E}[(C+f_i(\mathbf{X}))^\nu|X_i = 0]$ denote the expected payoff of $\node_i$ given all other nodes' actions and conditioned on the fact that $\node_i$ takes the action $\place$ and $\neg\place$ at slot $t+1$ respectively.
As the slashing function is symmetric and offers no rewards, there exist $e_0,e_1 \in [-p_s,0]$ such that $\mathbb{E}[(C-p_w+f_i(\mathbf{X}))^\nu|X_i = 1]=\mathbb{E}[(C-p_w+f_j(\mathbf{X}))^\nu|X_j = 1]=e_1$, and $\mathbb{E}[(C+f_i(\mathbf{X}))^\nu|X_i = 0]=\mathbb{E}[(C+f_j(\mathbf{X}))^\nu|X_j = 0]=e_0$ for all $i,j \in [N-f]$. 

For the action profile described above to be a Nash equilibrium, it must be the case that 
\begin{IEEEeqnarray*}{C}
r^* e_1 + (1-r^*)e_0 \geq r e_1 + (1-r) e_0
\end{IEEEeqnarray*} 
for all $r \in [0,1]$.
For each $i \in [N-f]$, the inequality above is satisfied by
\begin{itemize}
    \item $r^*=1$, if $e_1 > e_0$,
    \item $r^*=0$, if $e_1 < e_0$,
    \item any $r^* \in [0,1]$, if $e_1 = e_0$.
\end{itemize}

Let $r^*_{-i} \in [0,1]$ denote the probability that among the $N-f-1$ non-adversarial nodes other than $\node_i$, at most $k-1$ nodes take the action $\place$ (send clues to the contract) at slot $t+1$ in the equilibrium.
As the slashing function is compliant, $r^*_{-i}=r^*_{-j}=r^*_{-1}$ for all $i,j \in [N-f]$, and
\begin{IEEEeqnarray*}{rCl}
(1-r^*_{-1})(C-B)^\nu - r^*_{-1}(C-p_s)^\nu &\leq& e_0 \leq (C)^\nu \\
(C-p_w-p_s)^\nu &\leq& e_1 \leq (C-p_w)^\nu,
\end{IEEEeqnarray*}
where $B < p_w$.
This implies a value $r^*_{-1} \in (0,1]$ such that $e_1 \leq e_0$, and $r^*<1$.

If $e_1 < e_0$, then $r^*=0$.
On the other hand, if $e_1 = e_0$, as $B<p_w$, it must be the case that $r^*_{-1} \in (0,1]$, which implies $r^* \in [0,1)$.
Consequently, if $B < p_w$, there indeed exists a $(0,0)$-adversary $\mathcal{A}$ and a Nash equilibrium, where each non-adversarial node independently decides to not post its clue to the contract at the slot after a query is received by the contract, with probability $r^*>0$.

Finally, suppose there exists a subgame perfect equilibrium of the dynamic game, where none of the nodes send a valid clue to the client $\client$ over the network by slot $3$, and $\client$ sends its query to the contract by slot $2$.
Once the query appears in the contract, each nodes chooses to withhold its clue from the contract in the next slot with probability $r^*$.

As shown above, if there is a query in the contract, none of the nodes can increase its expected payoff by deviating from the specified action given the other nodes' actions.
Similarly, $\client$ cannot increase its expected payoff by not sending a query to the contract, when none of the nodes sends a valid clue over the network by slot $3$.
Finally, none of the nodes can increase its expected payoff by sending a valid clue to $\client$ over the network by slot $2$, since this does not affect $\client$'s behavior.
This is because $k>1$ and all other nodes refuse to send their clues over the network.
Hence, the claimed action profile indeed constitutes a subgame perfect equilibrium.
However, in this case, there are less than $k$ valid clues in the contract by slot $4$ with probability at least $(1-r^*)^{N-f}>0$.
Thus, there exists a $(0,0)$-adversary and a subgame perfect equilibrium, where $4$-security is violated with positive probability.
\end{proof}

\begin{proof}[Proof of Theorem~\ref{thm:security-two-risk-averse}]
Consider the subgame, where a query is received by the contract of Section~\ref{sec:contract} at some slot $t$.
Let $q^{\mathcal{A}}_{\mathrm{v}}$ denote the probability that given an adversary $\mathcal{A}$, there are less than $k$ valid clues in the contract at slot $t+1$ in the Nash equilibrium with the largest probability of failure.
We first show that there exists a function $q^*(x): [0,(N-f-k+1)(p_s-p_w)) \xrightarrow{} (0,1)$ such that given the slashing function of Section~\ref{sec:contract}, for any $p_0$-adversary $\mathcal{A}$, $0 \leq p_0 < (N-f-k+1)(p_s-p_w)$, it holds that $q^{\mathcal{A}}_{\mathrm{v}} \leq q^*(p_0)$.
By Remark~\ref{remark:strong-adversary}, if $p_0 \geq (N-f-k+1)(p_s-p_w)$, $q^{\mathcal{A}}_{\mathrm{v}}=1$ for any compliant slashing function, including the function of Section~\ref{sec:contract}.

Given a $p_0$-adversary with $p_0<(N-f-k+1)(p_s-p_w)$, consider the Nash equilibrium with the maximum failure probability. 
Let $q_i$ denote the probability that in the equilibrium, $\node_i$, $i \in [N-f]$, does not take the action $\place$, \ie, does not post its clue to the contract, and there are less than $k$ valid clues in the contract at slot $t+1$.
Let $p^i_b$ denote the bribe offered to $\node_i$ by $\mathcal{A}$.
As $\node_i$ maximizes its utility given all other nodes' actions in the equilibrium, any $q_i \in [0,1]$ satisfies the following inequality; otherwise, $\node_i$ can increase its utility by posting its clue to the contract at slot $t+1$:
\begin{IEEEeqnarray*}{C} 
(C-p_w)^\nu \leq q_i(p^i_b + C-p_s)^\nu + (1-q_i)(p^i_b + C-p_w)^\nu,
\end{IEEEeqnarray*}
which further implies
\begin{IEEEeqnarray*}{C} 
q_i \leq \frac{(p^i_b + C-p_w)^\nu-(C-p_w)^\nu}{(p^i_b + C-p_w)^\nu-(p^i_b + C-p_s)^\nu},
\end{IEEEeqnarray*}
where $C=p_s+p_w$.

Let $\mathcal{G}$ denote the set of subsets of $[N-f]$ with $N-f-k+1$ elements.
Let $E_G$, $G \in \mathcal{G}$, denote the event that the nodes in $G$ do not take the action $\place$ at slot $t+1$.
By definition of $q_i$,
\begin{IEEEeqnarray*}{C} 
q_i = \Pr[\cup_{G \in \mathcal{G}: i \in G} E_G] = \Pr[\cup_{G \in G_i} E_G],
\end{IEEEeqnarray*}
where $G_i = \{G \in \mathcal{G}: i \in G\}$.
Similarly, by definition of $q^{\mathcal{A}}_{\mathrm{v}}$,
\begin{IEEEeqnarray*}{C} 
q^{\mathcal{A}}_{\mathrm{v}} = \Pr[\cup_{G \in \mathcal{G}} E_G].
\end{IEEEeqnarray*}
Hence, the function $q^*(p_0)$ is upper bounded by the solution $\tilde{q}^*(p_0)$ to the following optimization problem:
\begin{equation*}
\begin{aligned}
\max_{G \in \mathcal{G}} &\ \Pr[\cup_{G \in \mathcal{G}} E_G]\\
\textrm{s.t.} &\ \Pr[\cup_{G \in G_i} E_G] = q_i \leq \frac{(p^i_b + C-p_w)^\nu-(C-p_w)^\nu}{(p^i_b + C-p_w)^\nu-(p^i_b + C-p_s)^\nu} \quad \forall i \in [N-f] \\
&\ \sum_{i=1}^{N-f} p^i_b \leq p_0 \\
&\ p^i_b \geq 0 \quad \forall i \in [N-f] \\
\end{aligned}
\end{equation*}
Let $\{\tilde{E}_G: G \in \mathcal{G}\}$, denote one set of events for which the value of $\Pr[\cup_{G \in \mathcal{G}} E_G]$ is maximized.
Let $\tilde{p}^i_b$ and $\tilde{q}_i$, $i \in [N-f]$, denote the optimal values of the parameters $p^i_b$ and $q_i$, $i \in [N-f]$, associated with this set of events.

We next construct a $p_0$-adversary $\mathcal{A}$ such that for any compliant slashing function, there exists a Nash equilibrium, where $q^{\mathcal{A}}_{\mathrm{v}} \geq \tilde{q}^*(p_0) \geq q^*(p_0)$. 
This would show that the slashing function of Section~\ref{sec:contract} is security optimal:
\begin{itemize}
    \item Bribes offered to the nodes $\node_i$, $i \in [N-f]$ are given by $\tilde{p}^i_b$.
    \item The probability distribution over the actions adopted by the adversarial and corrupt nodes is determined by the events $\tilde{E}_G$, which satisfy the equation $\Pr[\cup_{G \in G_i} \tilde{E}_G] = \tilde{q}_i$.
\end{itemize}
By definition of the optimization problem, it holds that
\begin{IEEEeqnarray}{C}
\label{eq:constraint}
\tilde{q}_i \leq \frac{(p^i_b + C-p_w)^\nu-(C-p_w)^\nu}{(p^i_b + C-p_w)^\nu-(p^i_b + C-p_s)^\nu} \quad \forall i \in [N-f] 
\end{IEEEeqnarray}

Recall that given any adversary $\mathcal{A}$, each Nash equilibrium can be described by the following variables: $\{\tilde{r}^*_i\}_{i \in [N-f]}$ and $\{r^*_i\}_{i \in [N-f]}$.
Here, $\tilde{r}^*_i$ denotes the probability that $\node_i$ accepts the bribe in the equilibrium, whereas $r^*_i$ denotes the probability that $\node_i$ takes the action $\place$, \ie, posts a valid clue to the contract, at slot $t+1$ in the event that it does not accept the bribe.
We allow $r^*_i$ to be undefined if $\node_i$ accepts the bribe, in which case it will take the action $\place$ as dictated by the adversary.

If a node $\node_i$, $i \in [N-f]$, rejects the bribe, its expected payoff becomes
\begin{IEEEeqnarray*}{C} 
\mathbb{E}[(C+f_i(\mathbf{X})-p_w)^\nu|X_i = 1]r^*_i + \mathbb{E}[(C+f_i(\mathbf{X}))^\nu|X_i = 0](1-r^*_i)
\end{IEEEeqnarray*}
Here the expectation is over the actions of the other nodes in the equilibrium.

As the slashing function satisfies symmetry, for any $i,j \in [N-f]$, $\mathbb{E}[(C+f_i(\mathbf{X})-p_w)^\nu|X_i = 1]=\mathbb{E}[(C+f_j(\mathbf{X})-p_w)^\nu|X_j = 1]=e_1$ and $\mathbb{E}[(C+f_i(\mathbf{X}))^\nu|X_i = 0]=\mathbb{E}[(C+f_j(\mathbf{X}))^\nu|X_j = 0]=e_0$.
As $\node_i$ maximizes its utility given all other nodes' actions in the equilibrium, it must be the case that
\begin{itemize}
    \item $r^*_i=1$, if $e_1 > e_0$. In this case, $\node_i$'s expected utility becomes $e_1$.
    \item $r^*_i=0$, if $e_1 < e_0$.
    \item $r^*_i$ can be any value in $[0,1]$, if $e_1 = e_0$.
    In this case, $\node_i$'s expected utility becomes $e_1$.
\end{itemize}
Here, $e_1$ is upper bounded as shown by the following lemma:
\begin{lemma}
For any $i \in [N-f]$, $p^i_b \geq 0$ and $r^*_i \in [0,1]$,
\label{lem:sublemma-inequality-risk-averse}
\begin{IEEEeqnarray*}{rCl} 
&& (C-p_w)^\nu + (1-r^*_i)(\mathbb{E}[(C -p_w + f_i(\mathbf{X})+ p^i_b)^\nu|X_i = 1]-(C+p^i_b-p_w)^\nu) \\
&\geq& \mathbb{E}[(C+f_i(\mathbf{X})-p_w)^\nu|X_i = 1] = e_1
\end{IEEEeqnarray*}
\end{lemma}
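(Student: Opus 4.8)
The plan is to reduce the claim to a single application of the concavity of $x \mapsto x^\nu$, after some elementary sign bookkeeping. First I would introduce the shorthand $a = C - p_w = p_s$, $b = p^i_b \ge 0$, $r = r^*_i \in [0,1]$, and let $Y = f_i(\mathbf{X})$ denote the random slashing adjustment to $\node_i$'s payoff, where the randomness is over the other nodes' equilibrium actions conditioned on $X_i = 1$. Two facts pin down the range of $Y$: the No Reward axiom (A2) gives $Y \le 0$, and the collateral bound $f_i(\mathbf{x}) \ge -p_s$ gives $a + Y = p_s + f_i(\mathbf{X}) \ge 0$, so every base appearing below is non-negative and the $\nu$-th powers are well defined. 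In this notation $e_1 = \mathbb{E}[(a+Y)^\nu]$, the term $\mathbb{E}[(C - p_w + f_i(\mathbf{X}) + p^i_b)^\nu \mid X_i = 1]$ equals $\mathbb{E}[(a + b + Y)^\nu]$, and the claim becomes
\[
a^\nu + (1-r)\bigl(\mathbb{E}[(a+b+Y)^\nu] - (a+b)^\nu\bigr) \ge \mathbb{E}[(a+Y)^\nu].
\]

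Next I would dispose of the factor $(1-r)$. Writing $D = \mathbb{E}[(a+b+Y)^\nu] - (a+b)^\nu$, monotonicity of $x \mapsto x^\nu$ together with $Y \le 0$ gives $D \le 0$; since $0 \le 1-r \le 1$, multiplying the non-positive number $D$ by $1-r$ can only increase it, i.e.\ $(1-r)D \ge D$. Hence it suffices to prove the $r$-free inequality obtained by replacing $(1-r)D$ with $D$, namely
\[
\mathbb{E}\bigl[(a+b+Y)^\nu - (a+Y)^\nu\bigr] \ge (a+b)^\nu - a^\nu.
\]

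The key step is then a diminishing-marginal-utility argument. Consider the increment function $g(t) = (t+b)^\nu - t^\nu$ on $t \ge 0$. Because $x \mapsto x^\nu$ is strictly concave for $\nu \in (0,1)$, $g$ is non-increasing: $g'(t) = \nu\bigl((t+b)^{\nu-1} - t^{\nu-1}\bigr) \le 0$, since $\nu - 1 < 0$ makes $s \mapsto s^{\nu-1}$ decreasing and $t+b \ge t$. As $Y \le 0$ pointwise we have $a + Y \le a$, so $g(a+Y) \ge g(a)$ pointwise; taking expectations yields exactly the displayed inequality, which completes the argument.

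The only genuine content is recognizing that the whole statement collapses to the monotonicity of $g$, so I expect this to be the main (and essentially the only) obstacle, while the elimination of $(1-r)$ and the verification that the bases are non-negative are routine. I would also double-check the one modeling point that the reduction rests on: the conditional distribution of $Y$ is the same whether or not $\node_i$ accepted the bribe, since the bribe only shifts $\node_i$'s payoff by the additive constant $b$ and does not alter the other nodes' equilibrium behavior. This is precisely what licenses the use of a single random variable $Y$ in both expectations.
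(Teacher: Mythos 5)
Your proof is correct and follows essentially the same route as the paper's: both arguments rest on the diminishing-increments property of the concave function $x \mapsto x^\nu$ (the paper phrases it as $(a-c)^\nu-(b-c)^\nu \geq a^\nu-b^\nu$ for $a \geq b \geq c \geq 0$, you as monotonicity of $g(t)=(t+b)^\nu-t^\nu$), combined with the sign facts $-p_s \leq f_i(\mathbf{x}) \leq 0$ and the observation that scaling a non-positive term by $1-r^*_i \in [0,1]$ only helps. The only difference is cosmetic ordering—you eliminate the $(1-r^*_i)$ factor at the expectation level before invoking concavity, while the paper handles it pointwise before taking expectations.
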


\begin{proof}
For any fixed $a,b,c \in \mathbb{R}^+ \cup \{0\}$ such that $a \geq b \geq c$, it holds that $(a-c)^\nu-(b-c)^\nu \geq a^\nu-b^\nu$, which implies $b^\nu-(b-c)^\nu \geq a^\nu-(a-c)^\nu$.
Moreover, for any compliant slashing function, $-p_s \leq f_i(\mathbf{x}) \leq 0$ for all $\mathbf{x} \in \{0,1\}^N$ and all $i \in [N]$.
Since $C=p_s+p_w$, it holds that $C-p_w+f_i(\mathbf{x}) \geq 0$ for all $\mathbf{x} \in \{0,1\}^N$ and all $i \in [N]$.
Then, for any $\mathbf{x} \in \{0,1\}^{N-f}$ and $i \in [N-f]$,
\begin{IEEEeqnarray*}{rCl} 
&& (C-p_w)^\nu - (C-p_w+f_i(\mathbf{x}))^\nu \geq (C-p_w+p^i_b)^\nu - (C-p_w+f_i(\mathbf{x})+p^i_b)^\nu \\
&\implies& (C-p_w)^\nu - ((C-p_w+p^i_b)^\nu - (C-p_w+f_i(\mathbf{x})+p^i_b)^\nu) \geq (C-p_w+f_i(\mathbf{x}))^\nu \\
&\implies& (C-p_w)^\nu - (1-r^*_i)((C-p_w+p^i_b)^\nu - (C-p_w+f_i(\mathbf{x})+p^i_b)^\nu) \\
&&\ \geq (C-p_w+f_i(\mathbf{x}))^\nu,
\end{IEEEeqnarray*}
since $1-r^*_i \in [0,1]$ and $(C-p_w+p^i_b)^\nu - (C-p_w+f_i(\mathbf{x})+p^i_b)^\nu \geq 0$.
Hence, by linearity of expectation,
\begin{IEEEeqnarray*}{rCl} 
&& (C-p_w)^\nu + (1-r^*_i)(\mathbb{E}[(C-p_w + f_i(\mathbf{X})+ p^i_b)^\nu|X_i = 1]-(C+p^i_b-p_w)^\nu) \\
&\geq& \mathbb{E}[(C-p_w+f_i(\mathbf{X}))^\nu|X_i = 1].
\end{IEEEeqnarray*}
\end{proof}

On the other hand, if $e_1>e_0$ and $\node_i$ accepts the bribe, its expected payoff becomes at least
\begin{IEEEeqnarray*}{rCl} 
&& \left(C +p^i_b - p_s\right)^\nu \tilde{q}_i + \mathbb{E}[(C -p_w + f_i(\mathbf{X})+ p^i_b)^\nu|X_i = 1](1-\tilde{q}_i) \\
&\geq& (C-p_w)^\nu + (1-\tilde{q}_i)(\mathbb{E}[(C -p_w + f_i(\mathbf{X})+ p^i_b)^\nu|X_i = 1]-(C+p^i_b-p_w)^\nu) \\
&\geq& e_1.
\end{IEEEeqnarray*}
Here, the first inequality follows from the equation~\eqref{eq:constraint}, and the last inequality follows from Lemma~\ref{lem:sublemma-inequality-risk-averse}.
Hence, if $e_1 \geq e_0$, the expected utility of node $\node_i$ when it accepts the bribe is at least as large as its expected utility when it rejects the bribe.
Thus, for the nodes $\node_i$, $i \in [N-f]$, there does not exist any action that dominates $(\tilde{r}^*_i,r^*_i)=(1,-)$, implying that they accept the adversary's bribe when any bribe offer is made.
In this case, the action profile specified by the adversary constitutes a Nash equilibrium, and in the equilibrium, $q^{\mathcal{A}}_{\mathrm{v}}$ becomes at least $\tilde{q}^*(p_0) \geq q^*(p_0)$.

Finally, if $e_1 < e_0$, then either the nodes $\node_i$, $i \in [N-f]$, all reject the bribe, and set $r^*_i=0$, or they all accept the bribe. 
The first case happens if for the nodes $\node_i$, $i \in [N-f-k+1]$, there does not exist any action that dominates $(\tilde{r}^*_i,r^*_i)=(0,0)$.
Then, in the equilibrium, $q^{\mathcal{A}}_{\mathrm{v}}$ becomes $1$.
In the latter case, $q^{\mathcal{A}}_{\mathrm{v}} \geq q^*(p_0)$ as argued above, thus concluding the proof.
\end{proof}
\section{Probability of security failure for risk-averse nodes}
\label{sec:appendix-risk-averse-bounds}

\begin{theorem}
\label{thm:optimization-problem}
For any $p_0$, $0 \leq p_0 < (N-f-k+1)(p_s-p_w)$, and $\nu \in (0,1]$, $q^*_{p_0,\nu}$ is the solution to the following optimization problem:
Let $\mathcal{G}$ denote the set of subsets of $[N-f]$ with $N-f-k+1$ elements.
Let $x_G$ denote variables indexed by the sets $G \in \mathcal{G}$.
\begin{equation}
\begin{aligned}
\label{eq:new-optimization-formula}
\max_{j \in [|\mathcal{G}|]} &\ \sum_{j=1}^{|\mathcal{G}|} x_G\\
\textrm{s.t.} &\ \sum_{G \in \mathcal{G}: i \in G} x_G \leq \frac{(p^i_b + p_s)^\nu-(p_s)^\nu}{(p^i_b + p_s)^\nu-(p^i_b + p_w)^\nu} \quad \forall i \in [N-f] \\
&\ \sum_{i=1}^{N-f} p^i_b \leq p_0 \\
&\ p^i_b \geq 0 \quad \forall i \in [N-f] \\
&\ x_G \in [0,1] \quad \forall G \in \mathcal{G}
\end{aligned}
\end{equation}
\end{theorem}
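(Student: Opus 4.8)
The plan is to show that $q^*_{p_0,\nu}$ coincides with the value of the linear program~\eqref{eq:new-optimization-formula} by connecting it to the \emph{probability-based} optimization problem that already appears in the proof of Theorem~\ref{thm:security-two-risk-averse}. First I would substitute $C = p_s + p_w$ into the equilibrium constraint derived there, so that $C - p_w = p_s$ and $C - p_s = p_w$; this turns the bound $q_i \le \frac{(p^i_b + C - p_w)^\nu - (C-p_w)^\nu}{(p^i_b + C - p_w)^\nu - (p^i_b + C - p_s)^\nu}$ into exactly the ratio $c_i := \frac{(p^i_b + p_s)^\nu - p_s^\nu}{(p^i_b + p_s)^\nu - (p^i_b + p_w)^\nu}$ appearing in~\eqref{eq:new-optimization-formula}. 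By the proof of Theorem~\ref{thm:security-two-risk-averse}, $q^*_{p_0,\nu}$ equals the optimum $\tilde q^*(p_0)$ of the problem that maximizes $\Pr[\cup_{G\in\mathcal{G}} E_G]$ subject to $\Pr[\cup_{G\in G_i} E_G] \le c_i$ and $\sum_i p^i_b \le p_0$, where $E_G$ is the event that every node in $G$ withholds its clue. It therefore suffices to prove that this probability problem and the linear program have the same optimal value, i.e.\ that the union probabilities can be replaced by the sums $\sum_G x_G$ and $\sum_{G \ni i} x_G$ without changing the optimum.

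For the direction that the LP value upper-bounds the probability problem, I would take any feasible joint distribution over the nodes' withholding decisions (any equilibrium of any $p_0$-adversary) and define a canonical witness map: on the failure event $F = \cup_{G} E_G$, the set $S$ of withholding non-adversarial nodes has $|S| \ge N-f-k+1$, so let $G(\omega)$ be the lexicographically least $G \in \mathcal{G}$ with $G \subseteq S$, and set $x_G := \Pr[G(\omega) = G]$. Then $\sum_G x_G = \Pr[F]$, and since $\{i \in G(\omega)\} \subseteq \{i \in S\} \cap F$ we get $\sum_{G \ni i} x_G = \Pr[i \in G(\omega)] \le \Pr[i \in S,\, F] = q_i \le c_i$, so $(x_G, p^i_b)$ is LP-feasible with the same objective. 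Hence every attainable value of the probability problem is matched by some LP-feasible point, giving $\tilde q^*(p_0) \le$ (LP value) and therefore $q^*_{p_0,\nu} \le$ (LP value).

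For the reverse (achievability) direction, I would take an optimal LP point $(x^*_G, p^{i*}_b)$ and realize it by \emph{mutually exclusive} instructions: with probability $x^*_G$ the adversary asks exactly the nodes of $G$ to withhold and all others to post, and with the remaining probability all non-adversarial nodes post. Because distinct $G \in \mathcal{G}$ have equal size, the events $E_G$ become disjoint, so $\Pr[\cup_G E_G] = \sum_G x^*_G$ and $\Pr[\cup_{G\in G_i} E_G] = \sum_{G\ni i} x^*_G \le c_i$; this last inequality is precisely condition~\eqref{eq:constraint}, so the best-response analysis together with Lemma~\ref{lem:sublemma-inequality-risk-averse} from the proof of Theorem~\ref{thm:security-two-risk-averse} shows that these bribes sustain a Nash equilibrium in which the offer is accepted. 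The resulting adversary achieves $q^{\mathcal{A}}_{\mathrm{v}} = \sum_G x^*_G$, giving $q^*_{p_0,\nu} \ge$ (LP value), and combining with the previous paragraph yields equality.

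The step I expect to be the main obstacle is making the achievability construction well defined, namely verifying that the LP optimum satisfies $\sum_G x^*_G \le 1$ so that the disjoint-event distribution is a genuine probability measure. I would establish this from the standing hypothesis $p_0 < (N-f-k+1)(p_s-p_w)$: using the uniform fractional cover $w_i = 1/(N-f-k+1)$, for which $\sum_{i\in G} w_i = 1$ for every $G \in \mathcal{G}$, gives $\sum_G x_G \le \sum_G x_G \sum_{i \in G} w_i = \frac{1}{N-f-k+1}\sum_i \sum_{G \ni i} x_G \le \frac{1}{N-f-k+1}\sum_i c_i$, and I would bound the right-hand side under the budget $\sum_i p^i_b \le p_0$ using the concavity of $x \mapsto x^\nu$, noting that $c_i(0)=0$ so unbribed nodes contribute no slack; the boundary case $p_0 \ge (N-f-k+1)(p_s-p_w)$, where the optimum saturates at $1$, is already covered by Remark~\ref{remark:strong-adversary}. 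The remaining care is purely bookkeeping: confirming that $c_i$ is well defined and lies in $[0,1]$ on the relevant range, so that both the canonical-witness inequalities and the disjoint-event probabilities are valid.
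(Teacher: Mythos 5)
Your proposal is correct and follows essentially the same route as the paper's proof: both reduce the claim to the probability-based program from the proof of Theorem~\ref{thm:security-two-risk-averse}, disjointify the events $E_G$ (your lexicographic canonical-witness map is exactly the paper's ordered disjointification $\hat{E}_{G^j} := E_{G^j} \setminus \cup_{j' < j} E_{G^{j'}}$), realize an optimal LP point via mutually exclusive withholding instructions whose equilibrium feasibility rests on the constraint~\eqref{eq:constraint} and Lemma~\ref{lem:sublemma-inequality-risk-averse}, and finish by substituting $C = p_s + p_w$. The one place you go beyond the paper is the explicit check that the LP optimum satisfies $\sum_{G} x^*_G \leq 1$ under $p_0 < (N-f-k+1)(p_s-p_w)$, so that the disjoint-event realization is a genuine probability distribution --- a point the paper leaves implicit --- and your fractional-cover plus concavity argument for it is sound.
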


By Remark~\ref{remark:strong-adversary}, $q^*_{p_0,\nu}=1$ if $p_0 \geq (N-f-k+1)(p_s-p_w)$.

\begin{proof}[Proof of Theorem~\ref{thm:optimization-problem}]
From the proof of Theorem~\ref{thm:security-two-risk-averse}, we know that $q^*_{p_0,\nu}$ is the solution to the following optimization problem, where $E_G$ is the event that the nodes in $G$ do not post their clues to the contract at slot $t+1$ when a query is received by the contract at slot $t$:
\begin{equation}
\label{eq:og-optimization-formula}
\begin{aligned}
\max_{G \in \mathcal{G}} &\ \Pr[\cup_{G \in \mathcal{G}} E_G]\\
\textrm{s.t.} &\ \Pr[\cup_{G \in \mathcal{G}: i \in G} E_G] = q_i \leq \frac{(p^i_b + C-p_w)^\nu-(C-p_w)^\nu}{(p^i_b + C-p_w)^\nu-(p^i_b + C-p_s)^\nu} \quad \forall i \in [N-f] \\
&\ \sum_{i=1}^{N-f} p^i_b \leq p_0 \\
&\ p^i_b \geq 0 \quad \forall i \in [N-f]
\end{aligned}
\end{equation}

Let $\{G^j\}_{j \in [|\mathcal{G}|]}$ denote a total order across the events $G \in \mathcal{G}$.
Defining the disjoint events $\hat{E}_{G^1} := E_{G^1}$, and $\hat{E}_{G^i} := E_{G^i} / (\cup_{j \in [i-1]} E_{G^j})$ for $i = 2, \ldots, |\mathcal{G}|$, we observe that the solution of the following optimization is at least as large as the solution to the problem presented by Formula~\eqref{eq:og-optimization-formula}, \ie, $q^*_{p_0,\nu}$:
\begin{equation}
\label{eq:intermediate-optimization-formula}
\begin{aligned}
\max_{j \in [|\mathcal{G}|]} &\ \sum_{j=1}^{|\mathcal{G}|} \Pr[\hat{E}_{G^j}]\\
\textrm{s.t.} &\ \sum_{G \in \mathcal{G}: i \in G} \Pr[\hat{E}_G] \leq \frac{(p^i_b + C-p_w)^\nu-(C-p_w)^\nu}{(p^i_b + C-p_w)^\nu-(p^i_b + C-p_s)^\nu} \quad \forall i \in [N-f] \\
&\ \sum_{i=1}^{N-f} p^i_b \leq p_0 \\
&\ p^i_b \geq 0 \quad \forall i \in [N-f] \\
&\ \Pr[\hat{E}_G] \in [0,1] \quad \forall G \in \mathcal{G}
\end{aligned}
\end{equation}
This is because by definition of $\hat{E}_{G^j}$, the optimal values are the same
\begin{equation*}
    \Pr[\cup_{G \in \mathcal{G}} E_G] = \Pr[\cup_{G \in \mathcal{G}} \hat{E}_G] =  \sum_{j=1}^{|\mathcal{G}|} \Pr[\hat{E}_{G^j}]
\end{equation*}
in both formulas whereas the constraints are relaxed in Formula~\eqref{eq:intermediate-optimization-formula}:
\begin{equation*}
    \sum_{G \in \mathcal{G}: i \in G} \Pr[\hat{E}_G] = \Pr[\cup_{G \in \mathcal{G}: i \in G} \hat{E}_G] \leq \Pr[\cup_{G \in \mathcal{G}: i \in G} E_G]
\end{equation*}

Note that by setting all of $E_G$, $G \in \mathcal{G}$, to be disjoint events, which implies $\hat{E}_G = E_G$, we can ensure that $q^*_{p_0,\nu}$ is the same as the solution to Formula~\eqref{eq:intermediate-optimization-formula}.
In this case, using $C=p_s+p_w$ and defining $x_G := \hat{E}_G$, we can re-write Formula~\eqref{eq:og-optimization-formula} as Formula~\eqref{eq:new-optimization-formula}.
\end{proof}

\begin{theorem}
\label{thm:p0-bound}
Suppose $p_0 < (N-f-k+1)(p_s-p_w)$.
Then,
\begin{IEEEeqnarray*}{rCl}
&& \frac{(p_s+p_b)^\nu-(p_s)^\nu}{(p_s+p_b)^\nu-(p_w+p_b)^\nu} \leq q^*_{p_0,\nu}, \text{ and} \\
&& q^*_{p_0,\nu} \leq 
\frac{1}{N-f-k+1}\min\left(\frac{p_0}{p_s-p_w},\frac{(p_s+p_0)^\nu-(p_s)^\nu}{(p_s+p_0)^\nu-(p_w+p_0)^\nu}\right),
\end{IEEEeqnarray*}
where $p_b = \frac{p_0}{N-f-k+1}$.
\end{theorem}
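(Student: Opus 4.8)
The plan is to work directly with the linear program of Theorem~\ref{thm:optimization-problem}, whose optimal value is exactly $q^*_{p_0,\nu}$. Writing $m := N-f-k+1$ and defining the per-node cap
\[
\gamma(p) := \frac{(p+p_s)^\nu-(p_s)^\nu}{(p+p_s)^\nu-(p+p_w)^\nu},
\]
the constraints read $q_i := \sum_{G \in \mathcal{G}: i \in G} x_G \le \gamma(p^i_b)$ for each $i \in [N-f]$, subject to $\sum_i p^i_b \le p_0$, $p^i_b\ge 0$, $x_G\in[0,1]$. Two elementary facts about $\gamma$ drive everything: $\gamma(0)=0$ with $\gamma$ increasing, and $\gamma(p)\le 1$ iff $p\le p_s-p_w$ (since $\gamma(p)\le 1 \iff (p+p_w)^\nu\le p_s^\nu$). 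Note $p_0 < m(p_s-p_w)$ forces $p_b:=p_0/m < p_s-p_w$.

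For the lower bound I would exhibit a single feasible point. Put $p^i_b = p_b$ on one fixed set $S\in\mathcal{G}$ and $0$ elsewhere, and set $x_S=\gamma(p_b)$, $x_G=0$ for $G\ne S$. The budget holds with equality; the constraint for $i\in S$ holds with equality; for $i\notin S$ it reads $0\le \gamma(0)=0$; and $x_S=\gamma(p_b)\in[0,1)$ because $p_b<p_s-p_w$. Hence the point is feasible with objective value $\gamma(p_b)$, which is precisely the claimed lower bound $\frac{(p_s+p_b)^\nu-(p_s)^\nu}{(p_s+p_b)^\nu-(p_w+p_b)^\nu}$.

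For the upper bound I would aggregate the $N-f$ constraints. Since $|G|=m$ for every $G$, summing gives $\sum_i q_i = \sum_G |G|\,x_G = m\sum_G x_G = m\,q^*_{p_0,\nu}$, so $q^*_{p_0,\nu}=\tfrac1m\sum_i q_i$ and it suffices to bound $\sum_i q_i$ in two ways. The first bound uses both $q_i\le\gamma(p^i_b)$ and $q_i\le 1$ (the objective is a failure probability, so $\sum_G x_G\le 1$ at optimum, hence $q_i\le\sum_G x_G\le 1$); the crux is the pointwise inequality $\min(\gamma(p),1)\le p/(p_s-p_w)$ for all $p\ge 0$. For $p>p_s-p_w$ it is immediate since $\min(\gamma(p),1)=1<p/(p_s-p_w)$, and for $p\le p_s-p_w$ it reduces to $\gamma(p)\le p/(p_s-p_w)$, i.e.\ $\frac{(p_s+p)^\nu-(p_s)^\nu}{p}\le\frac{(p_s+p)^\nu-(p+p_w)^\nu}{p_s-p_w}$: both sides are secant slopes of the concave map $t\mapsto t^\nu$ sharing the right endpoint $p_s+p$, with left endpoints $p_s\ge p+p_w$; monotonicity of the secant slope in the left endpoint gives the inequality. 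Summing $q_i\le p^i_b/(p_s-p_w)$ yields $\sum_i q_i\le p_0/(p_s-p_w)$. The second bound shows that concentrating the whole budget on one node maximizes $\sum_i\gamma(p^i_b)$: since $\gamma$ is superadditive, $\sum_i\gamma(p^i_b)\le\gamma(\sum_i p^i_b)\le\gamma(p_0)$, so $\sum_i q_i\le\gamma(p_0)$. Taking the minimum of the two upper bounds on $\tfrac1m\sum_i q_i$ finishes the proof.

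I expect the two structural properties of $\gamma$ to be the main obstacle. The secant-slope comparison for the first bound is clean, but the second bound rests on superadditivity of $\gamma$, which I would obtain from $\gamma$ being convex with $\gamma(0)=0$ (equivalently, $\gamma(p)/p$ nondecreasing); verifying this convexity/star-shapedness is a single-variable calculus computation and is the most delicate step, since $\gamma$ is a ratio of differences of concave powers and its second derivative must be checked to have the right sign.
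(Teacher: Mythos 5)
Your proof is correct, and its upper-bound half is essentially the paper's own argument: the paper likewise aggregates the constraints via $q^*_{p_0,\nu} = \tfrac{1}{N-f-k+1}\sum_i q_i$, gets the linear bound from the pointwise inequality $\gamma(p) \le p/(p_s-p_w)$ on $[0,p_s-p_w]$, and gets the $\gamma(p_0)$ bound by invoking convexity of $\gamma$ to argue the budget is best concentrated on a single node. Two of your refinements actually improve on the paper's write-up: the paper merely \emph{asserts} the inequality $\gamma(p)\le p/(p_s-p_w)$, whereas your secant-slope comparison for the concave map $t\mapsto t^\nu$ proves it; and the paper dismisses bribes $p^i_b > p_s-p_w$ with a ``wasted coins'' footnote, whereas your observation that at an optimum $q_i \le \sum_G x_G = q^*_{p_0,\nu}\le 1$ handles that case inside the optimization problem. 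Note also that both you and the paper rest on the same unproven analytic fact: the paper flatly asserts convexity of $\gamma$, while you flag convexity/star-shapedness as the delicate step to be checked (star-shapedness, i.e., $\gamma(p)/p$ nondecreasing, indeed suffices for your superadditivity step and is weaker than what the paper assumes). Where you genuinely diverge is the lower bound. You exhibit a feasible point of the program of Theorem~\ref{thm:optimization-problem} (bribe $p_b$ to the $N-f-k+1$ nodes of one fixed set $S$, $x_S=\gamma(p_b)$, everything else zero), which is a few lines once that theorem is granted. The paper instead re-runs the game-theoretic construction of Theorem~\ref{thm:security-two-risk-averse}: it builds an explicit $p_0$-adversary that asks the bribed nodes to withhold with probability $\tilde q = \gamma(p_b)$, and uses Lemma~\ref{lem:sublemma-inequality-risk-averse} to show that accepting the bribe is undominated, so a Nash equilibrium with failure probability at least $\tilde q$ exists. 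The paper's route is longer but proves something stronger --- the lower bound holds for \emph{every} compliant slashing function, not only for the contract whose failure probability the optimization problem characterizes --- while your route buys brevity and stays entirely inside the LP, which is all the theorem statement requires.
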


\begin{proof}[Proof of Theorem~\ref{thm:p0-bound}]
We first prove the upper bound.
Recall that the maximum value for $q^*_{p_0,\nu}$ is given by the solution to Formula~\eqref{eq:new-optimization-formula}, where 
\begin{IEEEeqnarray}{rCl} 
q^*_{p_0,\nu} &=& \sum_{G \in \mathcal{G}} x_G = \frac{1}{N-f-k+1} \sum_{i=1}^{N-f} \sum_{G \in \mathcal{G}: i \in G} x_G \\
\label{eq:upper-bound}
&\leq&  \frac{1}{N-f-k+1} \sum_{i=1}^{N-f} \frac{(p^i_b + C-p_w)^\nu-(C-p_w)^\nu}{(p^i_b + C-p_w)^\nu-(p^i_b + C-p_s)^\nu}
\end{IEEEeqnarray}
Since $\sum_{i=1}^{N-f} p^i_b \leq p_0$ and the function
\begin{equation*}
f(x) = \frac{(x + C-p_w)^\nu-(C-p_w)^\nu}{(x + C-p_w)^\nu-(x + C-p_s)^\nu}
\end{equation*}
is convex in $x$, the sum in equation~\eqref{eq:upper-bound} is maximized when one variable, \eg, $p^1_b$ is set to be $p_0$ and the rest becomes $0$.
Thus, given $C=p_s+p_w$, we obtain
\begin{equation*}
q^*_{p_0,\nu} \leq \frac{1}{N-f-k+1} \frac{(p_s+p_0)^\nu-(p_s)^\nu}{(p_s+p_0)^\nu-(p_w+p_0)^\nu}.
\end{equation*}

We next observe that for any $p^i_b$ such that $0 \leq p^i_b \leq p_s-p_w$\footnote{Offering more bribes to a node is a waste of coins for the adversary. 
Without loss of generality, we can assume $p^i_b \leq p_s-p_w$.}, and any $\nu \in (0,1)$, it holds that
\begin{equation*}
\frac{(p_s+p^i_b)^\nu-(p_s)^\nu}{(p_s+p^i_b)^\nu-(p_w+p^i_b)^\nu} \leq \frac{p^i_b}{p_s-p_w}
\end{equation*}
Thus, if we relax the constraints for $\sum_{G \in \mathcal{G}: i \in G} x_G$ by replacing their upper bounds with $p^i_b/(p_s-p_w)$ in Formula~\eqref{eq:new-optimization-formula}, and maximize the objective, which is possible since the problem has now become convex, we obtain
\begin{equation*}
\frac{1}{N-f-k+1}
\frac{p_0}{p_s-p_w}.
\end{equation*}
(Note that this is the solution to the optimization problem when $\nu=1$.)

Finally, we prove the lower bound on $q^*_{p_0,\nu}$.
For this purpose, we consider the $p_0$-adversary $\mathcal{A}$ that offers a bribe of $p^i_b = p_0/((N-f-k+1)(p_s-p_w)) = p_b$ to the nodes $\node_i$, $i \in [N-f-k+1]$, and for the set $Q \subseteq [N-f-k+1] \cup \{N-f+1,\ldots,N\}$, specifies
\begin{IEEEeqnarray*}{rCl} 
\Pr[X_i = 0, i \in Q] &=& \tilde{q} = \frac{(C+p_b-p_w)^\nu - (C-p_w)^\nu}{(C+p_b-p_w)^\nu-(C+p_b-p_s)^\nu}, \\
\Pr[X_i=1, i \in Q] &=& 1-\tilde{q} = 1-\frac{(C+p_b-p_w)^\nu - (C-p_w)^\nu}{(C+p_b-p_w)^\nu-(C+p_b-p_s)^\nu}.
\end{IEEEeqnarray*}
We will show that given $\mathcal{A}$, for any compliant slashing function, there exists a Nash equilibrium such that $q^{\mathcal{A}}_{\mathrm{v}} \geq \tilde{q}$, which would imply $q^*_{p_0,\nu} \geq \tilde{q}$.
Recall the variables $\tilde{r}^*_i$ and $r^*_i$ from the proof of Theorem~\ref{thm:security-two}.
In the equilibrium, either $(\tilde{r}^*_i,r^*_i) = (1,-)$ for all $i \in [N-f-k+1]$, or $(\tilde{r}^*_i,r^*_i) = (0,0)$ for all $i \in [N-f-k+1]$.
In other words, each node that is offered a bribe either accepts the bribe and becomes corrupted, or does not send a valid clue to the contract at slot $t+1$.

If a node $\node_i$, $i \in [N-f-k+1]$, rejects the bribe, its expected payoff becomes
\begin{IEEEeqnarray*}{C} 
\mathbb{E}[(C+f_i(\mathbf{X})-p_w)^\nu|X_i = 1]r^*_i + \mathbb{E}[(C+f_i(\mathbf{X}))^\nu|X_i = 0](1-r^*_i)
\end{IEEEeqnarray*}
As the slashing function satisfies symmetry, for any $i,j \in [N-f-k+1]$, $\mathbb{E}[(C+f_i(\mathbf{X})-p_w)^\nu|X_i = 1]=\mathbb{E}[(C+f_j(\mathbf{X})-p_w)^\nu|X_j = 1]=e_1$ and $\mathbb{E}[(C+f_i(\mathbf{X}))^\nu|X_i = 0]=\mathbb{E}[(C+f_j(\mathbf{X}))^\nu|X_j = 0]=e_0$.
As $\node_i$ maximizes its utility given all other nodes' actions in the equilibrium, it must be the case that
\begin{itemize}
    \item $r^*_i=1$, if $e_1 > e_0$. In this case, $\node_i$'s expected utility becomes $e_1$.
    \item $r^*_i=0$, if $e_1 < e_0$.
    \item $r^*_i$ can be any value in $[0,1]$, if $e_1 = e_0$.
    In this case, $\node_i$'s expected utility becomes $e_1$.
\end{itemize}

On the other hand, if $e_1>e_0$ and $\node_i$ accepts the bribe, its expected payoff becomes at least
\begin{IEEEeqnarray*}{rCl} 
&& \left(C +p_b - p_s\right)^\nu \tilde{q} + \mathbb{E}[(C + p_b + f_i(\mathbf{X})-p_w)^\nu|X_i = 1](1-\tilde{q}) \\
&\geq& (C-p_w)^\nu + (1-\tilde{q})(\mathbb{E}[(C + p_b + f_i(\mathbf{X})-p_w)^\nu|X_i = 1]-(C+p_b-p_w)^\nu) \\
&\geq& e_1,
\end{IEEEeqnarray*}
where the first equality follows from the value of $\tilde{q}$ specified by the adversary, and the last inequality follows from Lemma~\ref{lem:sublemma-inequality-risk-averse} of Theorem~\ref{thm:security-two-risk-averse}.
Thus, if $e_1 \geq e_0$, then the expected payoff of node $\node_i$ when it accepts the bribe is at least as large as its expected payoff when it rejects the bribe.
Hence, for the nodes $\node_i$, $i \in [N-f-k+1]$, there does not exist any action that dominates $(\tilde{r}^*_i,r^*_i)=(1,-)$.
Then, the action profile specified by the adversary constitutes a Nash equilibrium, and in the equilibrium, $q^{\mathcal{A}}_{\mathrm{v}}$ becomes at least $\tilde{q}$:
\begin{IEEEeqnarray*}{C}
\frac{(C+p_b-p_w)^\nu - (C-p_w)^\nu}{(C+p_b-p_w)^\nu-(C+p_b-p_s)^\nu} = \frac{(p_b+p_s)^\nu - (p_s)^\nu}{(p_b+p_s)^\nu-(p_b+p_w)^\nu},
\end{IEEEeqnarray*}
where $C=p_s+p_w$.

Finally, if $e_1 < e_0$, then either the nodes $\node_i$, $i \in [N-f]$, all reject the bribe, and set $r^*_i=0$, or they all accept the bribe. 
The first case happens if for the nodes $\node_i$, $i \in [N-f-k+1]$, there does not exist any action that dominates $(\tilde{r}^*_i,r^*_i)=(0,0)$.
Then, in the equilibrium, $q^{\mathcal{A}}_{\mathrm{v}}$ becomes $1$.
In the latter case, $q^{\mathcal{A}}_{\mathrm{v}} \geq \tilde{q}$ as argued above, thus concluding the proof.
\end{proof}

\paragraph{Lower and upper bounds on $p_0$ for risk-averse nodes}

\begin{figure}
    \centering
    \includegraphics[width=0.9\linewidth]{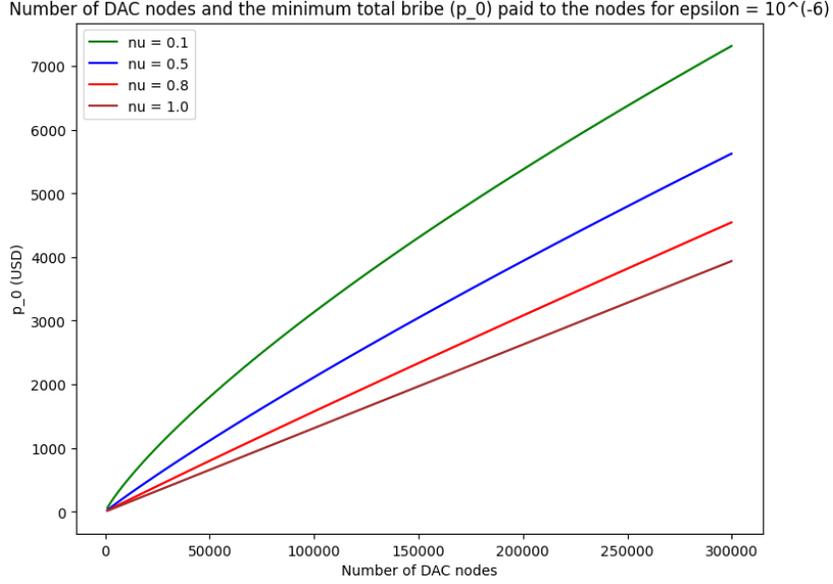}
    \caption{Same as Fig.~\ref{fig:bribe-vs-n}. Lower bounds on the total bribe, $p_0$, needed to ensure that the probability the client does not obtain the answer to its query is $\epsilon = 10^{-6}$, as a function of the number of DAC nodes $N$, and the utility functions $U(x)=x^\nu$, $\nu = 0.1, 0.5, 0.8, 1.0$.}
    \label{fig:bribe-vs-n-2}
\end{figure}

Let $\tilde{p}_0 = p_0 / p_w$ and $\tilde{p}_s = p_s / p_w \approx 1416$ for the given values of $p_s=32$ ETH and $p_w=0.0226$ ETH.
Using the formula of Theorem~\ref{thm:p0-bound}, we can bound the value of $\tilde{p}_0$ as a function of $N$, $\nu$ and $\epsilon$:
$\tilde{p}_{0,\min} \leq \tilde{p_0} \leq \tilde{p}_{0,\max}$,
where $\tilde{p}_{0,\max}$ and $\tilde{p}_{0,\min}$ satisfy the following expressions respectively:
\begin{IEEEeqnarray*}{rCl}
\epsilon &=& \frac{(1416+3\frac{\tilde{p}_{0,\max}}{N})^\nu-(1416)^\nu}{(1416+3\frac{\tilde{p}_{0,\max}}{N})^\nu-(1+3\frac{\tilde{p}_{0,\max}}{N})^\nu} \\
\epsilon &=&
\frac{3}{N}\min\left(\frac{\tilde{p}_{0,\min}}{1415},\frac{(1416+\tilde{p}_{0,\min})^\nu-(1416)^\nu}{(1416+\tilde{p}_{0,\min})^\nu-(1+\tilde{p}_{0,\min})^\nu}\right) \\
&=& \frac{3}{N}\frac{(1416+\tilde{p}_{0,\min})^\nu-(1416)^\nu}{(1416+\tilde{p}_{0,\min})^\nu-(1+\tilde{p}_{0,\min})^\nu}\ \text{for } N<300,000
\end{IEEEeqnarray*}
Solving for $\tilde{p}_{0,\min}$ and $\tilde{p}_{0,\max}$ at different values of $N$, $\nu$ and $\epsilon$, we can calculate the lower and upper bounds on $p_0$ for different $N<300,000$, utility functions of the form $U(x)=x^\nu$ and $\epsilon$.
These bounds are presented by Table~\ref{tab:bribe-bounds} and Figure~\ref{fig:bribe-vs-n}.
We observe that when $p_0\nu/p_s<<1$, the upper and lower bounds differ by at most a constant factor for all values of $\nu$.
\section{Proofs of Theorems~\ref{thm:security-one},~\ref{thm:security-three} and~\ref{thm:security-contract-not-used}}
\label{sec:appendix-proofs-dynamic-game}

\begin{proof}[Proof of Theorem~\ref{thm:security-one}]
We first show that for any $(p_0,p_1)$-adversary, $q(p_0,\nu) \leq q^*_{p_0,\nu}$ for the contract of Section~\ref{sec:contract}.
Consider a game played among the client and the DAC nodes.
At the beginning of slot $2$, the client $\client$ either received $k$ or more valid clues over the network from the nodes, or it did not.
Recall the definitions of the events $Q_{\geq k}$, $Q_{< k}$, $R$ and $P$ from the proof of Lemma~\ref{lem:security-zero}.
Recall Table~\ref{tab:client-utility} summarizing $\client$'s payoff under different actions taken by $\client$.

Since $4$-security is violated only in the event $\overline{R} \land \overline{P}$, given the slashing function of Section~\ref{sec:contract}, $q(p_0,\nu) = \Pr[\overline{R} \land \overline{P}]$.
If $\Pr[\overline{R} \land \overline{P}]=0$ in all Nash equilibria, it trivially holds that $0 = q(p_0,\nu) \leq q^*_{p_0,\nu}$.
Thus, we next assume that there exists a Nash equilibrium where $\Pr[\overline{R} \land \overline{P}]>0$.

Given that the event $R$ happens, \ie, $k$ or more nodes send clues to $\client$ over the network by slot $3$, the adversary cannot distinguish between the actions $Q_{\geq k} \land Q_{< k}$ and $Q_{\geq k} \land \overline{Q}_{< k}$, and between the actions $\overline{Q}_{\geq k} \land Q_{< k}$ and $\overline{Q}_{\geq k} \land \overline{Q}_{< k}$.
Moreover, the utility $(p_f-p_c)^\nu$ of the events $Q_{\geq k} \land Q_{< k}$ and $\overline{Q}_{\geq k} \land Q_{< k}$ exceeds the maximum utility $(p_\mathrm{comp}-p_c)^\nu$ of the events $Q_{\geq k} \land \overline{Q}_{< k}$ and $\overline{Q}_{\geq k} \land \overline{Q}_{< k}$ with the bribe.
Thus, even if the adversary offers an additional payoff of $p_1 < p_{\mathrm{comp}}-p_c$ for the actions $Q_{\geq k} \land \overline{Q}_{< k}$ or $\overline{Q}_{\geq k} \land \overline{Q}_{< k}$, if $\Pr[\overline{R} \land \overline{P}]>0$, then $\client$ can increase its expected utility by reducing the probability of taking $Q_{\geq k} \land \overline{Q}_{< k}$ in favor of $Q_{\geq k} \land Q_{< k}$, and reducing the probability of taking $\overline{Q}_{\geq k} \land \overline{Q}_{< k}$ in favor of $\overline{Q}_{\geq k} \land Q_{< k}$.
Hence, $\client$ never takes the actions $Q_{\geq k} \land \overline{Q}_{< k}$ and $\overline{Q}_{\geq k} \land \overline{Q}_{< k}$ with positive probability in any equilibrium where $\Pr[\overline{R} \land \overline{P}]>0$.
In other words, if $\client$ does not receive $k$ or more clues over the network by slot $3$, it sends a query to the contract by slot $2$ in any equilibrium with a positive failure probability for $4$-security. 

Suppose a query appears in the contract of Section~\ref{sec:contract} at some slot $t$.
By the optimality of the contract, for any given $p_0$ and $\nu$, no compliant contract with a different slashing function can ensure that there are less than $k$ valid clues in the contract at slot $t+1$ with probability less than $q^*_{p_0,\nu}$ in the equilibrium with the maximum failure probability.

Finally, if the event $R$ happens, \ie, the nodes send $k$ or more clues over the network by slot $3$, then $4$-security is satisfied with probability $1$.
If $R$ does not happen, then the client sends a query to the contract by slot $2$, in which case there are $k$ or more clues in the contract by slot $4$ except with probability $q^*_{p_0,\nu}$ in the equilibrium with the maximum failure probability.
Consequently, $4$-security is violated with probability at most $q^*_{p_0,\nu}$, implying that $q(p_0,\nu) \leq q^*_{p_0,\nu}$ for the contract of Section~\ref{sec:contract}.

For the achievability claim, consider a compliant contract and the adversary $\mathcal{A}$ from Theorem~\ref{thm:security-two-risk-averse}.
Suppose there exists a subgame perfect equilibrium, where none of the nodes sends a valid clue over the network by slot $3$, and $\client$ takes the action $\query$, \ie, sends its query to the contract, by slot $2$.
Once the query appears in the contract, the nodes follow the actions specified by $\mathcal{A}$ in the proof of Theorem~\ref{thm:security-two-risk-averse}.

Observe that if there is a query in the contract, none of the nodes can increase its expected utility by deviating from the action specified by the adversary $\mathcal{A}$ given the other nodes' actions.
Similarly, $\client$ cannot increase its expected utility by taking the action $\neg\query$ when none of the nodes sends a valid clue over the network by slot $3$.
Finally, none of the nodes can increase its expected utility by sending a valid clue to $\client$ over the network by slot $2$, since this does not affect $\client$'s behavior.
This is because $k>1$ and all other nodes withhold their clues.
Hence, the claimed action profile indeed constitutes a subgame perfect equilibrium.
However, in this case, the $p_0$-adversary $\mathcal{A}$ ensures that less than $k$ valid clues are sent to the contract by slot $4$ with probability at least $q^*_{p_0,\nu}$ for any compliant contract.
Consequently, there exists a $(p_0,0)$-adversary and a subgame perfect equilibrium such that $4$-security is violated with probability $q^*_{p_0,\nu}$.
\end{proof}

\begin{proof}[Proof of Theorem~\ref{thm:security-three}]
We first augment the adversary $\mathcal{A}$ from Theorem~\ref{thm:security-two-risk-averse} to also offer a bribe of $p_1$ to the client $\client$ such that $p_1 \leq p_{\mathrm{comp}}-p_c$ and $p_1$ satisfies formula~\eqref{eq:p1-constraint}.
In return, $\mathcal{A}$ asks $\client$ to take the action $\query$, \ie to send its query to the contract, by slot $2$ in addition to the actions specified for the nodes.
We show that the action profile, where $\client$ and the nodes accept their bribes, and none of the nodes sends valid clues over the network to $\client$ by slot $3$, constitutes a subgame perfect equilibrium, where security is violated with probability at least $q^*_{p_0,\nu}$.

Suppose a query appears in the contract at some slot $t$.
By the optimality of the slashing function of Section~\ref{sec:contract}, for any given $p_0$ and $\nu$, no contract with a different slashing function can ensure that there are less than $k$ valid clues in the contract at slot $t+1$ with probability less than $q^*_{p_0,\nu}$ in the Nash equilibrium with the maximum failure probability.

Suppose $\client$ receives no valid clues over the network by slot $3$.
Then, if $\client$ does not take the action $\query$ by slot $2$, $\client$'s expected utility becomes at most $0$.
Conversely, if $\client$ takes the action $\query$ by slot $2$, it obtains a utility of at least $(p_\mathrm{comp}-p_c+p_1)^\nu$.
Hence, if none of the nodes sends a valid clue to $\client$ by slot $3$, since $(p_\mathrm{comp}-p_c+p_1)^\nu>0$, $\client$ chooses to take the action $\query$ by slot $2$.

On the other hand, consider an action profile where one or more of the nodes sends a valid clue to $\client$ over the network by slot $2$.
In this case, if $\client$ accepts the bribe and takes the action $\query$ by slot $2$, its utility becomes $(p_f-p_c+p_1)^\nu$ with probability at most $1-q^*_{p_0,\nu}$ and $(p_f-p_c+p_1+p_{\mathrm{comp}})^\nu$ with probability at least $q^*_{p_0,\nu}$.
Thus, its expected utility becomes at least $(1-q^*_{p_0,\nu})(p_f-p_c+p_1)^\nu + q^*_{p_0,\nu}(p_f-p_c+p_1+p_{\mathrm{comp}})^\nu$.
However, if $\client$ rejects the bribe, its expected payoff can at most be $(p_f)^\nu$.
Since this is less than $(1-q^*_{p_0,\nu})(p_f-p_c+p_1)^\nu + q^*_{p_0,\nu}(p_f-p_c+p_1+p_{\mathrm{comp}})^\nu$ by formula~\eqref{eq:p1-constraint}, $\client$ cannot increase its expected utility by rejecting the bribe even if one or more of the nodes sends a valid clue over the network by slot $2$.
Hence, regardless of whether the nodes send valid clues to $\client$ by slot $2$ or not, $\client$ sends a query to the contract by slot $2$, implying that the nodes cannot increase their expected utility by deviating from the action profile claimed to be a subgame perfect equilibrium.

Finally, the action profile, where $\client$ accepts the bribe and sends a query to the contract by slot $2$, and the nodes do not send valid clues over the network by slot $3$, indeed constitutes a subgame perfect equilibrium.
However, in this case, the $p_0$-adversary $\mathcal{A}$ ensures that less than $k$ valid clues are sent to the contract by slot $4$ with probability at least $q^*_{p_0,\nu}$ for any compliant contract.
Hence, $4$-security is violated with probability at least $q^*_{p_0,\nu}$.
\end{proof}

\begin{proof}[Proof of Theorem~\ref{thm:security-contract-not-used}]
For the sake of contradiction, suppose there exists a $(p_0,p_1)$-adversary and a subgame perfect equilibrium, where all of the nodes withhold their clues from the client $\client$ with some positive probability.
By the proof of Theorem~\ref{thm:security-one}, if $\client$ receives no valid clues over the network by slot $2$, it takes the action $\query$, \ie sends query to the contract, by slot $2$.
Since $p_0 < (N-f)p_w$, given any distribution of bribes to the nodes, there exists a non-adversarial node $\node$, which receives a bribe less than $p_w$.
Thus, in the subgame, where $\client$ takes the action $\query$ by slot $2$, the maximum expected utility of $\node$ becomes less than $(C)^\nu$ if it takes the action $\place$ at slot $3$, \ie, sends its clue to the contract, as its bribe cannot compensate for the cost $p_w$ of sending a clue to the contract.
Similarly, $\node$'s expected utility becomes less than $(C)^\nu$ if it does not take the action $\place$ as its bribe cannot compensate for the minimum punishment $p_w+\epsilon$ of not sending a clue to the contract.
Thus, in this subgame perfect equilibrium, $\node$'s expected utility is less than $(C)^\nu$.

Given the action profile, where one of the nodes sends a valid clue to $\client$ over the network by slot $2$, thus enabling $\client$ to recover the response to its query, if $\client$ takes the action $\query$, its utility becomes $(p_f-p_c+p_1)^\nu$ with probability at least $1-q^*_{p_0,\nu}$ and $(p_f-p_c+p_1+p_{\mathrm{comp}})^\nu$ with probability at most $q^*_{p_0,\nu}$. 
In this case, $\client$'s expected utility is upper bounded by $(1-q^*_{p_0,\nu})(p_f-p_c+p_1)^\nu + q^*_{p_0,\nu}(p_f-p_c+p_1+p_{\mathrm{comp}})^\nu$.
Conversely, if $\client$ does not take the action $\query$, its payoff becomes $(p_f)^\nu$.
As 
\begin{IEEEeqnarray*}{C}
(1-q^*_{p_0,\nu})(p_f-p_c+p_1)^\nu + q^*_{p_0,\nu}(p_f-p_c+p_1+p_{\mathrm{comp}})^\nu < (p_f)^\nu,
\end{IEEEeqnarray*}
if at least one node sends a valid clue to $\client$ over the network by slot $2$, $\client$ does not take the action $\query$ at any slot.

Finally, if $\node$ deviates from the claimed equilibrium and sends a valid clue to $\client$ over the network by slot $2$, $\client$ does not take the action $\query$ before the game ends.
In this case, $\node$ obtains a payoff of at least $(C)^\nu$.
However, this is larger than $\node$'s claimed equilibrium payoff, implying a contradiction.
Thus, in all subgame perfect equilibria, there is at least one node that sends a valid clue to $\client$ by slot $2$ over the network, in which case $\client$ does not take the action $\query$ before the game ends.
Consequently, all subgame perfect equilibria of this game satisfies $4$-security without the use of the contract.
\end{proof}

\end{document}